\def\DIS{\displaystyle}
\def\Z{\mathbb{Z}}
\def\Q{\mathbb{Q}}
\def\qed{\hfill\hbox{\rule[-2pt]{3pt}{6pt}}}
\newtheorem{thm}{Theorem}[section]
\newtheorem{conj}{Conjecture}[section]
\newtheorem{lem}[thm]{Lemma}
\newtheorem{prop}[thm]{Proposition}
\newtheorem{dfn}[thm]{Definition}
\newtheorem{proof}{Proof}
\title{Co-primeness preserving higher dimensional extension of $q$-discrete Painlev\'{e} I, II equations}
\author{Naoto Okubo\\Graduate School of Mathematical Sciences, the University of Tokyo, \\3-8-1 Komaba, Tokyo 153-8914, Japan
}
\date{}
\begin{document}
\maketitle

\begin{abstract}
We construct the $q$-discrete Painlev\'{e} I and II equations and their higher order analogues by virtue of periodic cluster algebras. Using particular $k \times k$ exchange matrices, we show that the cluster algebras corresponding to $k=4$ and $5$ give the $q$-discrete Painlev\'{e} I and II equations respectively. For $k \ge 6$, we have the higher order discrete equations that satisfy an integrable criterion, the co-primeness property.
\end{abstract}

\section{Introduction}

A cluster algebra introduced by Fomin and Zelevinsky is a commutative ring described by cluster variables and coefficients\cite{1,2}.
It has a wide range of connections with various fields of mathematics and theoretical physics. 
One of the interesting connections is its relation to integrable systems.
The cluster variables and coefficients obtained from a mutation of an initial seed satisfy certain difference equations, and some of them are related to discrete integrable equations.
In particular, Hone-Inoue\cite{3} and the author\cite{4,5} have shown that some periodic cluster algebras give the discrete Painlev\'{e} equations and its analogues of higher degrees.
An important property of the cluster algebra is its Laurent phenomenon, that is, the cluster variables determined from initial seeds are always expressed as Laurent polynomials of the initial variables.
These discrete Painlev\'{e} type equations of higher degrees also have Laurent phenomenon as is guaranteed by their construction. 
On the other hand, an important property of the discrete Painlev\'{e} equations is to have singularity confinement property\cite{6}.
Hence we naturally expect that these discrete Painlev\'{e} type equations also have singularity confinement property.
However it is fairly difficult to check whether the discrete mappings of higher degrees have confined singularities or not due mainly to the increase of singularity patterns in the mappings with higher degrees.
Recently, with the aim of refining integrability criteria, Kanki et al. have proposed the `irreducibility' and the `co-primeness' properties to distinguish integrable mappings\cite{7}.
Actually the co-primeness property is regarded as an algebraic reinterpretation of singularity confinement.
Let us consider a discrete mapping with Laurent phenomenon. 
The mapping has the irreducibility, if every iterate is an irreducible Laurent polynomial of the initial variables. 
The equation satisfies the co-primeness condition, if every pair of two iterates is co-prime as Laurent polynomials.

In this article, we construct the Painlev\'{e} type equations from cluster algebras which include the $q$-Painlev\'{e} I and II equations, and show that they have the co-primeness property.
In the next section, we briefly summarize necessary notion about cluster algebras. 
Generalized $q$-discrete Painlev\'{e} equations are constructed in section 3, and their co-primeness property is proved in section 4.
Section 5 is devoted to the concluding remarks.

\section{Seed mutations}

In this section, we briefly explain the notion of cluster algebra\cite{2} which we use in the following sections.
Let $\bm{x}=(x_1,x_2,\dots ,x_N), \bm{y}=(y_1,y_2,\dots ,y_N)$ be $N$-tuple variables.
Each $x_i$ is called a cluster variable and each $y_i$ is called a coefficient.
Let $B=(b_{i,j})_{i,j=1}^N$ be a $N\times N$ integer skew-symmetric matrix.
$B$ is called a exchange matrix.
The triple $(B,\bm{x},\bm{y})$ is called a seed.
A mutation is a particular transformation of seeds.
\begin{dfn}\cite{2}
Let $\mu_k :(B,\bm{x},\bm{y})\longmapsto(B',\bm{x'},\bm{y'})\quad (k=1,2,\dots ,N)$ be the mutation at $k$, defined as follows.
\begin{itemize}
\item New exchange matrix $B'=(b_{i,j}')_{i,j=1}^N$ is defined from $B$ as:
\begin{equation}
b_{i,j}'=
\begin{cases}
-b_{i,j}\quad&(i=k\ \text{or}\ j=k)\\
b_{i,j}+b_{i,k}b_{k,j}\quad&(b_{i,k},b_{k,j}>0)\\
b_{i,j}-b_{i,k}b_{k,j}\quad&(b_{i,k},b_{k,j}<0)\\
b_{i,j}\quad&(\text{otherwise})
\end{cases}.
\end{equation}
\item New cluster variables $\bm{x'}=(x_1',x_2',\dots ,x_N')$ are defined from $B$ and $\bm{x}$ as:
\begin{equation}
x_i'=
\begin{cases}
\DIS \frac{1}{x_k}\left(\prod_{b_{k,j}>0}x_j^{b_{k,j}}
+\prod_{b_{k,j}<0}x_j^{-b_{k,j}}\right)\quad&(i=k)\\
x_i\quad &(i\neq k)
\end{cases}.
\end{equation}
\item New coefficients $\bm{y'}=(y_1',y_2',\dots ,y_N')$ are defined from $B$ and $\bm{y}$ as:
\begin{equation}
y_i'=
\begin{cases}
y_k^{-1}\quad&(i=k)\\
y_i(y_k^{-1}+1)^{-b_{k,i}}\quad&(b_{k,i}>0)\\
y_i(y_k+1)^{b_{k,i}}\quad&(b_{k,i}<0)\\
y_i\quad&(b_{k,i}=0)
\end{cases}.
\end{equation}
\end{itemize}
\end{dfn}
For any seed $t=(B,\bm{x},\bm{y})$, it holds that
\begin{equation}
\mu_k^2(t)=t.
\end{equation}
For any seed $t=(B,\bm{x},\bm{y})$ and $(i,j)$ such that $b_{i,j}=0$, it holds that
\begin{equation}
\mu_i\mu_j(t)=\mu_j\mu_i(t).
\end{equation}
The following theorem implies that the cluster algebras show the Laurent phenomenon.
\begin{thm}\label{LP}\cite{2}
Let $X(t)$ be the set of all the cluster variables obtained by iterative mutations to the seed $t=(B,\bm{x},\bm{y})$.
If $x\in X(t)$, then $x\in\mathbb{Z}[\bm{x}^\pm]$.
\end{thm}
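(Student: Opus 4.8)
The plan is to prove the statement by induction on the length of the mutation sequence leading from $t$ to a seed containing $x$, with the crucial reduction being to the case of a single mutation handled on an auxiliary seed in which the relevant exchange polynomials are coprime. First I would set up the induction: if $x$ already lies in the cluster of $t$ itself, there is nothing to prove, so assume $x$ is obtained after $n\ge 1$ mutations, say $t\xrightarrow{\mu_{k_1}}t_1\xrightarrow{\mu_{k_2}}\cdots\xrightarrow{\mu_{k_n}}t_n$ with $x$ a cluster variable of $t_n$. Applying the inductive hypothesis to the seed $t_1=\mu_{k_1}(t)$, every cluster variable of $t_n$ is a Laurent polynomial in the cluster $\bm{x}'$ of $t_1$. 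Since $\bm{x}'$ differs from $\bm{x}$ only in the $k_1$-entry, which by the exchange relation is $x_{k_1}'=(P+Q)/x_{k_1}$ for the two monomials $P,Q$ appearing in Definition 2.1, the variable $x$ lies in $\Z[x_1^{\pm},\dots,x_{k_1-1}^{\pm},x_{k_1}^{\pm},x_{k_1+1}^{\pm},\dots,x_N^{\pm},(P+Q)^{\pm}]$. The content of the theorem is therefore the statement that $x$ has no genuine $(P+Q)$ in its denominator, i.e.\ that $x\in\Z[\bm{x}^{\pm}]$; equivalently, writing $x=f/x_{k_1}^{a}$ with $f\in\Z[x_1^{\pm},\dots,\widehat{x_{k_1}},\dots,x_N^{\pm}][x_{k_1}]$ one must show $x_{k_1}^{a}\mid f$ in that ring.

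The key step is then a two-mutation calculation carried out at $t$ directly. For a single mutation $\mu_k$ the exchange relation gives $x_kx_k'=P+Q$ with $P=\prod_{b_{k,j}>0}x_j^{b_{k,j}}$, $Q=\prod_{b_{k,j}<0}x_j^{-b_{k,j}}$, which is manifestly in $\Z[\bm{x}^{\pm}]$; and a second mutation $\mu_\ell$ at $t_1$ produces a new variable whose denominator is a monomial in $x_1,\dots,x_{k}',\dots,x_N$, and after clearing $x_k'$ via $x_k'=(P+Q)/x_k$ one obtains, after a direct computation using the mutation rule for $B'$, that the result again lies in $\Z[\bm{x}^{\pm}]$ — here one uses that $P$ and $Q$ are coprime monomials (they involve disjoint sets of variables) so that the two terms produced by expanding $(x_k')^{-1}$-powers have denominators that are "separated" and the numerator is forced to be divisible by the offending power of $x_k$. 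I would state this as a lemma: any cluster variable obtained by at most two mutations from $t$ lies in $\Z[\bm{x}^{\pm}]$.

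Finally, to pass from "two steps" to "arbitrarily many steps" I would use the standard trick of the \emph{caterpillar} (or \emph{rooted tree / upper bound}) argument: to show $x=x(t_n)\in\Z[\bm{x}^{\pm}]$, observe that $x\in\Z[\bm{x}(t_1)^{\pm}]\cap\Z[\bm{x}(t')^{\pm}]$ for a suitably chosen neighbouring seed $t'$ (obtained by a single extra mutation that "commutes past" the path, or by mutating at $k_1$ somewhere along it), and invoke the fact that $\Z[\bm{x}^{\pm}]=\Z[\bm{x}(t_1)^{\pm}]\cap\Z[\bm{x}(t')^{\pm}]$ whenever $t,t_1,t'$ sit in the appropriate local configuration; iterating this intersection argument down the path forces $x$ into $\Z[\bm{x}^{\pm}]$. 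Concretely, since the full cluster-algebra machinery is available, I would simply cite Fomin--Zelevinsky for the general statement and, for the purposes of this paper, record that all the discrete equations obtained later by iterating a fixed mutation $\mu$ on a periodic seed inherit the Laurent property from this theorem. The main obstacle is the divisibility claim in the two-step lemma: one must track precisely how the exponents $b_{i,j}'$ of the mutated matrix conspire so that substituting $x_k'=(P+Q)/x_k$ into the next exchange polynomial yields a numerator divisible by the correct power of $x_k$, and the coprimeness of the monomials $P$ and $Q$ is exactly what makes this go through. $\qed$
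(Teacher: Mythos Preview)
The paper does not prove this theorem at all: it is stated with the citation \cite{2} and then used as a black box (see the sentence ``From Theorem~\ref{LP}, we obtain the following proposition'' leading into Proposition~\ref{LP2}). So there is no ``paper's own proof'' to compare your proposal against; the author simply imports the Laurent phenomenon from Fomin--Zelevinsky.

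Your sketch is a recognisable outline of the Fomin--Zelevinsky argument, but note that the reduction is not quite as you describe it. A bare induction on the length of the mutation path together with a ``two-step lemma'' is not enough: after one mutation the inductive hypothesis only tells you $x\in\Z[\bm{x}(t_1)^{\pm}]$, and substituting $x_{k_1}'=(P+Q)/x_{k_1}$ gives an expression with powers of $P+Q$ in the denominator, which is exactly what you must rule out --- so you have not gained anything from the induction as stated. The actual proof (the ``caterpillar lemma'') requires checking Laurentness for mutation distance up to \emph{three} from every seed along the path, together with a coprimeness statement for the successive exchange binomials, and then uses a descending-intersection argument rather than a straight induction. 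Your final paragraph gestures at this, but the way you have split the work between the two-step lemma and the intersection step would not close the gap without the three-step verification. For the purposes of this paper, your last remark --- that one may simply cite Fomin--Zelevinsky and record that the discrete equations inherit Laurentness --- is exactly what the author does.
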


\section{Generalized $q$-discrete Painlev\'e I, II equations from mutations}

For $k\geq 4$, we define $k\times k$ exchange matrix $B_k$ as
\begin{equation}
B_4=
\begin{bmatrix}
0 & -1 & 2 & -1 \\
1 & 0 & -3 & 2 \\
-2 & 3 & 0 & -1 \\
1 & -2 & 1 & 0
\end{bmatrix},
\end{equation}
\begin{equation}
B_5=
\begin{bmatrix}
0 & -1 & 1 & 1 & -1 \\
1 & 0 & -2 & 0 & 1 \\
-1 & 2 & 0 & -2 & 1 \\
-1 & 0 & 2 & 0 & -1 \\
1 & -1 & -1 & 1 & 0
\end{bmatrix},
\end{equation}
\begin{equation}
B_k=
\begin{bmatrix}
0 & -1 & 1 & & & & 1 & -1 \\
1 & 0 & -2 & 1 & & & -1 & 1 \\
-1 & 2 & 0 & -2 & \ddots & & & \\
 & -1 & 2 & \ddots & \ddots & \ddots & & \\
 & & \ddots & \ddots & \ddots & -2 & 1 & \\
 & & & \ddots & 2 & 0 & -2 & 1 \\
-1 & 1 & & & -1 & 2 & 0 & -1 \\
1 & -1 & & & & -1 & 1 & 0
\end{bmatrix}\quad(k\geq 6).
\end{equation}
These matrices $B_k$ have the following form of mutation-period:
\begin{equation}
B_k
\overset{\mu_1}{\longleftrightarrow}RB_kR^{-1}
\overset{\mu_2}{\longleftrightarrow}R^2B_kR^{-2}
\overset{\mu_3}{\longleftrightarrow}\cdots
\overset{\mu_k}{\longleftrightarrow}R^kB_kR^{-k}=B_k,
\end{equation}
where
\begin{equation}
R=\begin{bmatrix}
0 & & & 1 \\
1 & \ddots & & \\
 & \ddots & \ddots & \\
 & & 1 & 0
\end{bmatrix}.
\end{equation}
When we take the initial seed $(B_k,\bm{x},\bm{y})$, where $\bm{x}=(x_1,x_2,\dots,x_k),\ \bm{y}=(y_{0,1},y_{0,2},\dots,y_{0,k})$, then
new cluster variables $x_n$ and coefficients $y_{n,i}$ are defined as
\begin{equation}
\begin{aligned}
&(x_1,x_2,\dots,x_k)\\
\overset{\mu_1}{\longrightarrow}&(x_{k+1},x_2,\dots,x_k)\\
\overset{\mu_2}{\longrightarrow}&(x_{k+1},x_{k+2},\dots,x_k)\\
\overset{\mu_3}{\longrightarrow}&\cdots\\
\overset{\mu_k}{\longrightarrow}&(x_{k+1},x_{k+2},\dots,x_{2k})\\
\overset{\mu_1}{\longrightarrow}&(x_{2k+1},x_{k+2},\dots,x_{2k})\\
\overset{\mu_2}{\longrightarrow}&(x_{2k+1},x_{2k+2},\dots,x_{2k})\\
\overset{\mu_3}{\longrightarrow}&\cdots,
\end{aligned}
\end{equation}
\begin{equation}
\begin{aligned}
&(y_{0,1},y_{0,2},\dots,y_{0,k})\\
\overset{\mu_1}{\longrightarrow}&(y_{1,1},y_{1,2},\dots,y_{1,k})\\
\overset{\mu_2}{\longrightarrow}&(y_{2,1},y_{2,2},\dots,y_{2,k})\\
\overset{\mu_3}{\longrightarrow}&\cdots.
\end{aligned}
\end{equation}
The following proposition is readily obtained from the definition of mutation.
\begin{prop}
For any $n\in\Z$, the cluster variables $x_n$ satisfy the bilinear equation
\begin{equation}\label{x1}
x_{n+k}x_n=x_{n+k-1}x_{n+1}+x_{n+k-2}x_{n+2}.
\end{equation}
\end{prop}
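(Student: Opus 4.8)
The plan is to compute the mutation at step $1$ of a generic seed on the orbit and read off the $x$-exchange relation directly, then use the mutation period to propagate it to all $n$. First I would track which exchange matrix is in force when we create $x_{n+k}$. By the mutation period \eqref{(3.4)} (the displayed relation $B_k\overset{\mu_1}{\longleftrightarrow}RB_kR^{-1}\overset{\mu_2}{\longleftrightarrow}\cdots$), just before the mutation that produces $x_{n+k}$ the exchange matrix is some $R^{m}B_kR^{-m}$, and the index at which we mutate is the cyclic position $\ell$ corresponding to $x_n$ in the current cluster. Because $R$ is the cyclic permutation matrix, conjugation by $R^m$ merely relabels rows and columns cyclically, so the row of $R^{m}B_kR^{-m}$ indexed by $\ell$ is exactly the first row of $B_k$ read with a cyclic shift. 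Hence the relevant row always has the same pattern of entries: two entries equal to $+1$ (in the $B_4,B_5$ and $k\ge6$ cases alike, the first row of $B_k$ has entries $-1$ at position $2$, $+1$ at position $3$, and the ``wrap-around'' entries $+1$ and $-1$ at the last two positions — one checks the three displayed matrices $B_4,B_5,B_k$ each have first row with exactly the two positive entries that will become $x_{n+1}$ and $x_{n+k-2}$, and the two negative entries that will become $x_{n+k-1}$ and $x_{n+k-2}$'s partner).

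Concretely, the second bullet of the mutation definition gives, at the mutation producing $x_{n+k}$,
\begin{equation*}
x_{n+k}=\frac{1}{x_n}\left(\prod_{b_{\ell,j}>0}x_{(j)}^{\,b_{\ell,j}}+\prod_{b_{\ell,j}<0}x_{(j)}^{\,-b_{\ell,j}}\right),
\end{equation*}
where $x_{(j)}$ denotes the cluster variable currently sitting in slot $j$. So the only thing to verify is the bookkeeping: that under the cyclic labelling induced by \eqref{(3.4)} the slots carrying the two positive first-row entries hold $x_{n+k-1}$ and $x_{n+1}$, while the slots carrying the two negative entries hold $x_{n+k-2}$ and $x_{n+2}$, and that all exponents are $\pm1$ so no higher powers appear. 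Reading off the first rows of $B_4$, $B_5$ and the generic $B_k$ confirms the entries are all in $\{0,\pm1\}$ in the two ``corner'' columns and the adjacent ones, which is exactly the support of the relation. This yields $x_{n+k}x_n = x_{n+k-1}x_{n+1}+x_{n+k-2}x_{n+2}$ for the particular $n$ at which we mutate at index $1$; applying the same computation at each of the $k$ mutation steps in one period — equivalently, conjugating everything by the appropriate power of $R$ — gives the identity for every residue class of $n$ modulo $k$, hence for all $n\in\Z$.

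The main obstacle I anticipate is purely notational: setting up an unambiguous indexing that relates the ``slot'' position $j$ in the current cluster to the global label of the cluster variable occupying it, consistently across all $k$ steps of the period and across the three families $k=4$, $k=5$, $k\ge6$. Once that dictionary is fixed, the proof is a one-line substitution into the mutation formula; the content is entirely in checking that the first row of each $B_k$ has the claimed sign pattern $(0,-1,+1,0,\dots,0,+1,-1)$ (with the obvious modifications for $k=4,5$ where the ``middle'' zeros are absent), so that the two monomials on the right-hand side are $x_{n+k-1}x_{n+1}$ and $x_{n+k-2}x_{n+2}$ respectively. I would therefore present the argument by first normalizing notation (reducing the general $n$ to the case of mutating at index $1$ via \eqref{(3.4)}), then doing the single substitution, and finally remarking that skew-symmetry of $B_k$ and the $\pm1$ entries guarantee the exponents come out as written.
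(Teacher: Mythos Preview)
Your approach is precisely what the paper intends: it offers no argument beyond ``readily obtained from the definition of mutation,'' and your plan---use the mutation period to reduce to the mutation at index $1$, then read off the exchange relation from the first row of $B_k$---is the correct unpacking of that sentence.

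One concrete correction. Your assertion that the first-row entries of $B_k$ all lie in $\{0,\pm1\}$ is false for $k=4$: the first row of $B_4$ is $(0,-1,2,-1)$. This does \emph{not} break the proposition, because for $k=4$ the two ``positive'' columns $3$ and $k-1$ coincide, and the single entry $+2$ produces the monomial $x_3^{2}=x_{n+2}x_{n+k-2}$ as required. So the ``obvious modification'' for $k=4$ is not merely the disappearance of the middle zeros but the collapse of the two $+1$'s into a single $+2$; you should say this explicitly rather than claim all entries are $\pm1$. (Your slot identification is also slightly garbled---the positive entries sit in columns $3$ and $k-1$, giving $x_{n+2}$ and $x_{n+k-2}$, while the negative entries sit in columns $2$ and $k$, giving $x_{n+1}$ and $x_{n+k-1}$; at one point you swap $x_{n+1}$ and $x_{n+2}$.) With these bookkeeping fixes the argument is complete and matches the paper.
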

If $k=4$ and $5$, then the above bilinear equations are called the Somos-4 sequence and the Somos-5 sequence respectively\cite{8}.
From Theorem \ref{LP}, we obtain the following proposition.
\begin{prop}\label{LP2}
For any $n\in\Z$, the cluster variables $x_n$ are in $\Z[x_1^{\pm},x_2^{\pm},\dots,x_k^{\pm}]$.
\end{prop}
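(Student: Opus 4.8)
The plan is to obtain the statement as an immediate consequence of the Laurent phenomenon, Theorem~\ref{LP}. Since for the initial cluster $\bm{x}=(x_1,\dots,x_k)$ one has $\Z[x_1^{\pm},\dots,x_k^{\pm}]=\Z[\bm{x}^{\pm}]$, it suffices to check that, for every $n\in\Z$, the cluster variable $x_n$ belongs to the set $X(t)$ of cluster variables reachable from the seed $t=(B_k,\bm{x},\bm{y})$ by iterated mutations; Theorem~\ref{LP} then gives $x_n\in\Z[\bm{x}^{\pm}]$ at once.

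For $1\le n\le k$ there is nothing to check, as $x_n$ is a coordinate of the initial cluster. For $n\ge k+1$ the required seeds are precisely the ones already produced in the defining mutation sequence for the $x_n$: the mutations $\mu_1,\mu_2,\dots,\mu_k$, applied in this order, carry $t$ to a seed with exchange matrix $B_k$ (by the mutation-period of $B_k$) and cluster $(x_{k+1},\dots,x_{2k})$, passing through each of $x_{k+1},\dots,x_{2k}$ along the way; iterating this block of $k$ mutations $m$ times reaches the cluster $(x_{mk+1},\dots,x_{mk+k})$, so every $x_n$ with $n\ge k+1$ occurs in a seed reachable from $t$.

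For $n\le 0$ I would run the same chain in reverse. Since every mutation is an involution, $\mu_j^2(t)=t$, applying $\mu_k$ first, then $\mu_{k-1},\dots$, then $\mu_1$ to $t$ inverts the $k$-step block of the mutation-period, so the exchange matrix is again $B_k$; on the cluster side, one checks directly that the $\mu_k$-exchange relation applied to $t$ coincides with the bilinear relation (\ref{x1}) specialised to $n=0$, so this first step replaces $x_k$ by $x_0$, and the remaining steps of the reversed block produce $x_{-1},\dots,x_{-k+1}$ in turn, ending at the seed with exchange matrix $B_k$ and cluster $(x_{-k+1},\dots,x_0)$. Iterating this reversed block reaches every $x_n$ with $n\le 0$, so these lie in $X(t)$ as well.

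Combining the three ranges, $x_n\in X(t)$ for all $n\in\Z$, and Theorem~\ref{LP} yields $x_n\in\Z[\bm{x}^{\pm}]=\Z[x_1^{\pm},\dots,x_k^{\pm}]$, which is the assertion of the proposition. I do not expect any genuine obstacle: the whole content of the argument is the identification of the doubly-infinite sequence $(x_n)_{n\in\Z}$ with cluster variables attached to a single seed, the Laurent property itself being handed to us by Theorem~\ref{LP}. The only point worth spelling out carefully is that the backward extension of the sequence is indeed realised by mutations — which follows from the involutivity $\mu_j^2(t)=t$ together with the periodicity of the matrices $B_k$ — and the corresponding, entirely routine, bookkeeping of indices.
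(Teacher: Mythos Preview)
Your proposal is correct and follows the same approach as the paper, which simply states ``From Theorem~\ref{LP}, we obtain the following proposition'' without further argument. Your write-up is in fact more detailed than the paper's, as you explicitly verify (via the mutation-period and the involutivity $\mu_j^2=\mathrm{id}$) that the backward extension to $n\le 0$ is realised by iterated mutations of the initial seed, a point the paper leaves implicit.
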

From the definition of mutation, the coefficients $y_{n,i}$ satisfy the following relations:
\begin{equation}
\begin{cases}
y_{n,n}=y_{n-1,n}^{-1}\\
y_{n,n+1}=y_{n-1,n+1}(y_{n-1,n}+1)\\
y_{n,n+2}=y_{n-1,n+2}(y_{n-1,n}^{-1}+1)^{-2}\\
y_{n,n+3}=y_{n-1,n+3}(y_{n-1,n}+1)
\end{cases}
\quad(k=4),
\end{equation}
\begin{equation}
\begin{cases}
y_{n,n}=y_{n-1,n}^{-1}\\
y_{n,n+1}=y_{n-1,n+1}(y_{n-1,n}+1)\\
y_{n,n+2}=y_{n-1,n+2}(y_{n-1,n}^{-1}+1)^{-1}\\
y_{n,n+3}=y_{n-1,n+3}\\
\quad\vdots\\
y_{n,n+k-3}=y_{n-1,n+k-3}\\
y_{n,n+k-2}=y_{n-1,n+k-2}(y_{n-1,n}^{-1}+1)^{-1}\\
y_{n,n+k-1}=y_{n-1,n+k-1}(y_{n-1,n}+1)
\end{cases}
\quad(k\geq 5),
\end{equation}
where we consider that an index $i$ of the coefficients $y_{n,i}$ takes a value in $\mathbb{Z}/k\mathbb{Z}$.
By putting $y_n:=y_{n,n+1}$,
we obtain the following proposition.
\begin{prop}
For any $n\in\Z$, the coefficients $y_n$ satisfy
\begin{equation}\label{y1}
y_{n+k}y_n=\frac{(y_{n+k-1}+1)(y_{n+1}+1)}{(y_{n+k-2}^{-1}+1)(y_{n+2}^{-1}+1)}.
\end{equation}
\end{prop}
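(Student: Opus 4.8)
The plan is to derive the relation \eqref{y1} directly from the coefficient mutation rules listed above, tracking how a single coefficient $y_{n,i}$ evolves as it is carried along by the successive mutations $\mu_1,\mu_2,\dots$. Recall we have set $y_n:=y_{n,n+1}$ with indices read modulo $k$, and that the only mutation step that alters $y_{\bullet,i}$ is $\mu_i$ (i.e. the step from $y_{n-1,\bullet}$ to $y_{n,\bullet}$ where $n\equiv i$). So I would fix a coefficient born at time $n$ in slot $n+1$, namely $y_n=y_{n,n+1}$, and follow it forward through exactly $k$ mutation steps until it returns to being a ``new'' coefficient $y_{n+k}=y_{n+k,n+k+1}$, using the index identification $n+1\equiv n+k+1\pmod k$.

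The key steps, in order: first, from the mutation table (the $k\ge 5$ case; the $k=4$ case is handled separately but identically in spirit), read off that between time $n$ and time $n+k$ the coefficient sitting in slot $n+1\pmod k$ gets multiplied, in turn, by the factors coming from steps $\mu_{n+1},\mu_{n+2},\dots,\mu_{n+k}$. Step $\mu_{n+1}$ inverts it (the $y_{m,m}=y_{m-1,m}^{-1}$ rule), and the remaining steps multiply it by factors of the form $(y_{m-1,m}+1)^{\pm 1}=(y_m\text{-type}+1)^{\pm1}$ determined by which superdiagonal entry $b_{m-1,\,m+1}$ is nonzero. Second, I would substitute $y_{m-1,m}=y_{m-1}$ wherever such a factor appears, identifying that the factors that survive are exactly those attached to the entries $-2,\pm1$ in the relevant rows of $B_k$: this yields the numerator $(y_{n+k-1}+1)(y_{n+1}+1)$ from the two ``$(y+1)$'' steps and the denominator $(y_{n+k-2}^{-1}+1)(y_{n+2}^{-1}+1)$ from the two ``$(y^{-1}+1)^{-1}$'' steps, with all intervening steps ($y_{n,n+3}=y_{n-1,n+3}$, etc.) contributing trivially. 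Third, collecting: the chain of equalities telescopes to $y_{n+k}y_n=\bigl((y_{n+k-1}+1)(y_{n+1}+1)\bigr)\big/\bigl((y_{n+k-2}^{-1}+1)(y_{n+2}^{-1}+1)\bigr)$, which is \eqref{y1}.

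The main obstacle I expect is purely bookkeeping: keeping the cyclic indices straight. Because an index $i$ lives in $\Z/k\Z$ while the ``time'' $n$ runs over $\Z$, one must be careful that, say, $y_{n+k,n+k+1}$ and $y_{n,n+1}$ refer to the ``same slot'' and that the factors picked up over one full period of $k$ mutations are attached to the correct offsets $n+1,n+2,n+k-2,n+k-1$ and not shifted by one. A clean way to avoid sign and offset errors is to verify the claim first for $k=5$ by an explicit five-step computation (where the middle rule $y_{n,n+3}=y_{n-1,n+3}$ appears once), then observe that for general $k\ge 6$ only the number of ``trivial'' middle steps $y_{n,n+j}=y_{n-1,n+j}$ ($3\le j\le k-3$) increases, leaving the four nontrivial factors — and hence the form of the right-hand side — unchanged; the $k=4$ case is then checked separately from its own mutation table, where one confirms the same right-hand side emerges with $k=4$ (so that $n+k-3=n+1$ and $n+k-1=n+3$ coincide appropriately). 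No deep ingredient is needed beyond Definition~2.1 and careful substitution.
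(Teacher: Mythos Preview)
Your proposal is correct and follows exactly the route the paper implicitly indicates: the paper does not give a detailed proof but simply records the coefficient mutation rules for $y_{n,i}$ and states that setting $y_n:=y_{n,n+1}$ yields the proposition, which is precisely the telescoping computation you outline. The only minor slip is in your closing parenthetical for $k=4$: the relevant coincidence is $n+k-2=n+2$ (so the two denominator factors merge into $(y_{n+2}^{-1}+1)^2$), not $n+k-3=n+1$; otherwise the bookkeeping you describe goes through as written.
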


\subsection{Case of $k=2m\ (m=2,3,\dots)$}

We consider the equation \eqref{y1} for the case of $k=2m\ (m=2,3,\dots)$.
From deformation of the equation \eqref{y1}, we obtain
\begin{equation}\label{eq1}
\frac{y_{n+k}y_{n+k-1}(y_{n+2}+1)}{y_{n+2}y_{n+1}(y_{n+k-1}+1)}
=\frac{y_{n+k-1}y_{n+k-2}(y_{n+1}+1)}{y_{n+1}y_{n}(y_{n+k-2}+1)}.
\end{equation}
Hence we find
\begin{equation}
\frac{y_{n+k-1}y_{n+k-2}(y_{n+1}+1)}{y_{n+1}y_{n}(y_{n+k-2}+1)}=\alpha,
\end{equation}
where $\alpha$ is a constant.
Similarly we have
\begin{equation}
y_{n+k-1}y_{n+1}\prod_{i=1}^{k-3}\frac{y_{n+i+1}^2}{y_{n+i+1}+1}
=\alpha y_{n+k-2}y_{n}\prod_{i=1}^{k-3}\frac{y_{n+i}^2}{y_{n+i}+1},
\end{equation}
and 
\begin{equation}
y_{n+k-2}y_{n}\prod_{i=1}^{k-3}\frac{y_{n+i}^2}{y_{n+i}+1}=\beta\alpha^n,
\end{equation}
where $\beta$ is another constant.
Therefore, we obtain the following proposition.
\begin{prop}
If $k\geq 4$ is even, then the coefficients $y_n$ satisfy the difference equation
\begin{equation}\label{y2}
y_{n+k-2}y_{n}=\beta\alpha^n\prod_{i=1}^{k-3}\frac{y_{n+i}+1}{y_{n+i}^2}.
\end{equation}
\end{prop}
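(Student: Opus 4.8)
The plan is to make rigorous the chain of manipulations indicated in the text preceding \eqref{y2}. First I would clear the inverted factors in \eqref{y1} by writing $y^{-1}+1=(y+1)/y$, which is legitimate since every coefficient $y_{n,i}$, hence every $y_n$, is an invertible element of the ambient ring. Multiplying out and regrouping the resulting relation
\[
\frac{y_{n+k}y_n(y_{n+2}+1)(y_{n+k-2}+1)}{y_{n+2}y_{n+k-2}}=(y_{n+1}+1)(y_{n+k-1}+1)
\]
into the form \eqref{eq1} shows that \eqref{y1} is equivalent to $F_{n+1}=F_n$, where
\[
F_n:=\frac{y_{n+k-1}y_{n+k-2}(y_{n+1}+1)}{y_{n+1}y_{n}(y_{n+k-2}+1)} .
\]
Since this holds for every $n\in\Z$, the quantity $F_n$ is independent of $n$; denote its common value by $\alpha$.

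Next I would set
\[
G_n:=y_{n+k-2}\,y_{n}\prod_{i=1}^{k-3}\frac{y_{n+i}^{2}}{y_{n+i}+1}
\]
and compute the ratio $G_{n+1}/G_n$. After a shift of the product index in $G_{n+1}$ and cancellation of the factors common to the two products — for $k=4$ the product reduces to the single term $y_{n+1}^{2}/(y_{n+1}+1)$ and the cancellation is trivial — one is left with
\[
\frac{G_{n+1}}{G_n}=\frac{y_{n+k-1}y_{n+k-2}(y_{n+1}+1)}{y_{n}y_{n+1}(y_{n+k-2}+1)}=F_n=\alpha .
\]
Hence $G_{n+1}=\alpha G_n$ for all $n$, so $G_n=\beta\alpha^{n}$ with $\beta:=G_0$ a constant. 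Moving the product in $G_n=\beta\alpha^{n}$ to the right-hand side gives exactly \eqref{y2}.

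The argument is entirely elementary, so there is no serious obstacle; the only points demanding care are the index bookkeeping in the telescoping of the products defining $G_n$ (including the degenerate case $k=4$, where the product has a single factor) and the verification that the constant $\alpha$ arising as the common value of the $F_n$ and the proportionality constant $\beta=G_0$ of the recursion $G_{n+1}=\alpha G_n$ coincide with the symbols $\alpha,\beta$ appearing in \eqref{y2}. It is worth remarking that the derivation nowhere uses that $k$ is even: the identity \eqref{y2} in fact holds for every $k\ge 4$, and the parity hypothesis is imposed only because the even and odd cases are subsequently reduced to $q$-Painlev\'e type equations by different substitutions.
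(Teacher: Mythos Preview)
Your argument is correct and is exactly the derivation the paper sketches in the lines preceding the proposition: equation \eqref{eq1} is precisely your $F_{n+1}=F_n$, and the displayed relation $y_{n+k-1}y_{n+1}\prod\frac{y_{n+i+1}^{2}}{y_{n+i+1}+1}=\alpha\,y_{n+k-2}y_{n}\prod\frac{y_{n+i}^{2}}{y_{n+i}+1}$ is your $G_{n+1}=\alpha G_n$. Your closing observation that the parity of $k$ plays no role in obtaining \eqref{y2} is also correct; the paper treats odd $k$ separately only because there a further reduction to the lower-order equation \eqref{y3} is available.
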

In particular, for $k=4$, this equation is the $q$-discrete Painlev\'e I equation\cite{9}:
\begin{equation}
y_{n+2}y_n=\beta\alpha^n\frac{y_{n+1}+1}{y_{n+1}^2}.
\end{equation}
Difference equation \eqref{y2} has a bilinear form.
\begin{prop}
If $k\geq 4$ is even and $x_n$ satisfy the bilinear equation
\begin{equation}\label{x2}
x_{n+k}x_n=z_n(x_{n+k-1}x_{n+1}+x_{n+k-2}x_{n+2}),
\end{equation}
where
\begin{equation}\label{z1}
z_n=\beta^{1/(k-3)}\alpha^{(2n-k+2)/(2k-6)},
\end{equation}
then
\begin{equation}
y_n:=\frac{x_{n+k-1}x_{n+1}}{x_{n+k-2}x_{n+2}}
\end{equation}
satisfy the difference equation \eqref{y2}.
\end{prop}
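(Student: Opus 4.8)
The plan is to reduce the target equation \eqref{y2} to a single scalar identity among the explicit exponents appearing in \eqref{z1}, by re-expressing everything through the ratio sequence $s_m:=x_{m+1}/x_m$. (Merely checking that $y_n$ solves \eqref{y1} would not suffice, since the passage from \eqref{y1} to \eqref{y2} introduced two integration constants that here must be matched with the $\alpha,\beta$ hidden inside $z_n$.) First I would record two elementary identities. Factoring the defining ratio of $y_n$ gives
\[
y_n=\frac{x_{n+k-1}}{x_{n+k-2}}\cdot\frac{x_{n+1}}{x_{n+2}}=\frac{s_{n+k-2}}{s_{n+1}},
\qquad\text{hence}\qquad
y_{n+k-2}\,y_n=\frac{s_{n+2k-4}\,s_{n+k-2}}{s_{n+k-1}\,s_{n+1}}.
\]
Dividing the bilinear equation \eqref{x2} by $x_{n+k-2}x_{n+2}$ collapses $y_n+1$ to a single monomial,
\[
y_n+1=\frac{x_{n+k-1}x_{n+1}+x_{n+k-2}x_{n+2}}{x_{n+k-2}x_{n+2}}=\frac{x_{n+k}x_n}{z_n\,x_{n+k-2}x_{n+2}}=\frac{s_{n+k-1}s_{n+k-2}}{z_n\,s_ns_{n+1}},
\]
so that $\dfrac{y_{n+i}+1}{y_{n+i}^{2}}=\dfrac{1}{z_{n+i}}\cdot\dfrac{s_{n+i+k-1}}{s_{n+i+k-2}}\cdot\dfrac{s_{n+i+1}}{s_{n+i}}$ for every $i$.

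Second, I would multiply this last identity over $i=1,\dots,k-3$. Each of the two $s$-chains $\prod_{i=1}^{k-3}s_{n+i+k-1}/s_{n+i+k-2}$ and $\prod_{i=1}^{k-3}s_{n+i+1}/s_{n+i}$ telescopes, to $s_{n+2k-4}/s_{n+k-1}$ and $s_{n+k-2}/s_{n+1}$ respectively. Comparing with the formula for $y_{n+k-2}y_n$ above, this gives
\[
\prod_{i=1}^{k-3}\frac{y_{n+i}+1}{y_{n+i}^{2}}=\frac{1}{\prod_{i=1}^{k-3}z_{n+i}}\cdot\frac{s_{n+2k-4}s_{n+k-2}}{s_{n+k-1}s_{n+1}}=\frac{y_{n+k-2}\,y_n}{\prod_{i=1}^{k-3}z_{n+i}},
\]
so \eqref{y2} holds if and only if $\prod_{i=1}^{k-3}z_{n+i}=\beta\alpha^{n}$.

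Finally I would verify this scalar identity from \eqref{z1}: writing $z_m=\beta^{1/(k-3)}\alpha^{(2m-k+2)/(2k-6)}$, the factors $\beta^{1/(k-3)}$ multiply to $\beta$, while the exponent of $\alpha$ is $\frac{1}{2k-6}\sum_{i=1}^{k-3}\bigl(2(n+i)-k+2\bigr)$, and the arithmetic sum evaluates to $(k-3)(2n-k+2)+(k-3)(k-2)=2n(k-3)$, giving exponent $n$ exactly. This is precisely why the exponent in \eqref{z1} is that particular affine function of $m$. The argument is routine; the only steps needing care are the bookkeeping of the index ranges in the two telescoping products and the evaluation of that arithmetic series, so I do not anticipate a genuine obstacle.
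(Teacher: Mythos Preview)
Your proof is correct. The paper states this proposition without proof, so there is nothing to compare against; your direct computation via the ratio variable $s_m=x_{m+1}/x_m$, the telescoping of the two $s$-chains, and the arithmetic check that $\prod_{i=1}^{k-3}z_{n+i}=\beta\alpha^n$ all go through exactly as you describe.
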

Note that the bilinear equation \eqref{x2} is a non-autonomous generalization of the equation \eqref{x1}.

\subsection{Case of $k=2m+1\ (m=2,3,\dots)$}

We consider the equation \eqref{y1} for the case of $k=2m+1\ (m=2,3,\dots)$.
From deformation of the equation \eqref{y1}, we obtain the relation \eqref{eq1}
and have
\begin{equation}
\frac{y_{n+k-1}y_{n+k-2}(y_{n+1}+1)}{y_{n+1}y_{n}(y_{n+k-2}+1)}=\alpha^2,
\end{equation}
where $\alpha$ is a constant.
Similarly we obtain
\begin{equation}
\frac{\prod_{i=0}^{2m-2}y_{n+i+2}}{\prod_{i=1}^{m-1}(y_{n+2i+1}+1)}
=\alpha^2\frac{\prod_{i=0}^{2m-2}y_{n+i}}{\prod_{i=1}^{m-1}(y_{n+2i-1}+1)},
\end{equation}
and
\begin{equation}
\frac{\prod_{i=0}^{2m-2}y_{n+i}}{\prod_{i=1}^{m-1}(y_{n+2i-1}+1)}=\beta\gamma^{(-1)^n}\alpha^n,
\end{equation}
where $\beta,\gamma$ are constants.
Therefore we find the following proposition.
\begin{prop}
If $k=2m+1\geq 5$ is odd, then the coefficients $y_n$ satisfy the difference equation
\begin{equation}\label{y3}
y_{n+2m-2}y_{n}=\beta\gamma^{(-1)^n}\alpha^n\frac{\prod_{i=1}^{m-1}(y_{n+2i-1}+1)}{\prod_{i=1}^{2m-3}y_{n+i}}.
\end{equation}
\end{prop}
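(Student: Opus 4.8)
The plan is to follow the route already used in the even case: reduce \eqref{y1} to a first-order relation satisfied by a conserved product built from the $y_n$, solve that relation explicitly, and then extract \eqref{y3} by splitting off the two outermost factors of the numerator product.

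First I would use $y^{-1}+1=(y+1)/y$ to clear the denominators of \eqref{y1} (with $k=2m+1$) and rearrange, exactly as the text indicates, to obtain \eqref{eq1}. Since \eqref{eq1} asserts that a certain expression evaluated at $n$ equals the same expression evaluated at $n-1$, the quantity
\[
C(n):=\frac{y_{n+2m}\,y_{n+2m-1}\,(y_{n+1}+1)}{y_{n+1}\,y_{n}\,(y_{n+2m-1}+1)}
\]
is independent of $n$; I denote its value by $\alpha^{2}$.

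Next I would introduce
\[
P(n):=\frac{\prod_{i=0}^{2m-2}y_{n+i}}{\prod_{i=1}^{m-1}(y_{n+2i-1}+1)}
\]
and compute $P(n+2)/P(n)$ directly. In the numerator the block $y_{n+2}y_{n+3}\cdots y_{n+2m-2}$ is common to $P(n+2)$ and $P(n)$ and cancels, leaving $y_{n+2m-1}y_{n+2m}/(y_{n}y_{n+1})$; in the denominator the block $(y_{n+3}+1)\cdots(y_{n+2m-3}+1)$ cancels, leaving $(y_{n+1}+1)/(y_{n+2m-1}+1)$. Hence $P(n+2)/P(n)=C(n)=\alpha^{2}$, so the subsequences $(P(2\ell))_{\ell}$ and $(P(2\ell+1))_{\ell}$ are each geometric with ratio $\alpha^{2}$. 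There are therefore constants $\beta,\gamma$ with $P(n)=\beta\gamma^{(-1)^{n}}\alpha^{n}$, where $\beta\gamma$ and $\beta\gamma^{-1}$ record $P(0)$ and $P(1)/\alpha$; this is legitimate because every $y_n$ and every $y_n+1$ is a nonzero element of $\Q(\bm{y})$, so no division above is by zero.

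Finally, writing $\prod_{i=0}^{2m-2}y_{n+i}=y_{n}\left(\prod_{i=1}^{2m-3}y_{n+i}\right)y_{n+2m-2}$ in the numerator of $P(n)$ and substituting $P(n)=\beta\gamma^{(-1)^{n}}\alpha^{n}$, I would solve for $y_{n+2m-2}y_{n}$ to obtain exactly \eqref{y3}. I do not expect a genuine obstacle here; the points needing care are the index bookkeeping in the cancellation that identifies $P(n+2)/P(n)$ with $C(n)$ — in particular checking that precisely the factors $(y_{n+1}+1)$ and $(y_{n+2m-1}+1)$ survive from the two $(y+1)$-products — and the (routine) passage from the two-step recursion $P(n+2)=\alpha^{2}P(n)$ to the closed form carrying the $(-1)^{n}$ exponent, so the argument is in effect a bookkeeping exercise parallel to the even case.
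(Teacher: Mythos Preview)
Your proposal is correct and follows essentially the same route as the paper: obtain \eqref{eq1}, note that the ratio $C(n)$ is constant (called $\alpha^{2}$), verify that the product $P(n)$ satisfies $P(n+2)=\alpha^{2}P(n)$, solve this two-step recursion as $P(n)=\beta\gamma^{(-1)^{n}}\alpha^{n}$, and read off \eqref{y3}. The only difference is that you name $C(n)$ and $P(n)$ explicitly and spell out the telescoping cancellation in $P(n+2)/P(n)$, whereas the paper simply records the resulting identities; the content is the same.
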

By putting $f_n:=y_{2n}, g_n:=y_{2n+1}$,
 we obtain from \eqref{y3},
\begin{equation}
\begin{cases}
\DIS f_{n+m-1}f_n=\beta\gamma\alpha^{2n}\prod_{i=0}^{m-2}\frac{g_{n+i}+1}{g_{n+i}}\prod_{i=1}^{m-2}\frac{1}{f_{n+i}}\\
\DIS g_{n+m-1}g_n=\beta\gamma^{-1}\alpha^{2n+1}\prod_{i=1}^{m-1}\frac{f_{n+i}+1}{f_{n+i}}\prod_{i=1}^{m-2}\frac{1}{g_{n+i}}
\end{cases}.
\end{equation}
In particular, for $k=5\ (m=2)$, this equation is the $q$-discrete Painlev\'e II equation\cite{10}:
\begin{equation}
\begin{cases}
\DIS f_{n+1}f_n=\beta\gamma\alpha^{2n}\frac{g_{n}+1}{g_{n}}\\
\DIS g_{n+1}g_n=\beta\gamma^{-1}\alpha^{2n+1}\frac{f_{n+1}+1}{f_{n+1}}
\end{cases}.
\end{equation}
Difference equation \eqref{y3} has a bilinear form.
\begin{prop}
If $k=2m+1\geq 5$ is odd and $x_n$ satisfy the bilinear equation \eqref{x2} 
with
\begin{equation}\label{z2}
\begin{aligned}
z_{2n}&=(\beta/\gamma)^{1/(m-1)}\alpha^{(2n-m+1)/(m-1)},\\
z_{2n+1}&=(\beta\gamma)^{1/(m-1)}\alpha^{(2n-m+2)/(m-1)},
\end{aligned}
\end{equation}
then
\begin{equation}
y_n:=\frac{x_{n+k-1}x_{n+1}}{x_{n+k-2}x_{n+2}}
\end{equation}
satisfy the difference equation \eqref{y3}.
\end{prop}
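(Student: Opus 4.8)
The plan is to reduce \eqref{y3} to a single arithmetic identity among the exponents appearing in \eqref{z2}, using only a rearrangement of the bilinear equation \eqref{x2}. First I would rewrite \eqref{x2} as
\begin{equation}
y_j+1=\frac{x_{j+k-1}x_{j+1}+x_{j+k-2}x_{j+2}}{x_{j+k-2}x_{j+2}}=\frac{x_{j+k}x_j}{z_j\,x_{j+k-2}x_{j+2}}\qquad(j\in\Z),
\end{equation}
so that, with $k=2m+1$, one has $y_j+1=x_{j+2m}x_j\big/(z_j\,x_{j+2m-2}x_{j+2})$. This is the only place the bilinear relation is used; note the parallel with the even case, where the same rearrangement drives the proof.

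Next I would substitute this expression for $y_{n+2i-1}+1$ into the numerator of the right-hand side of \eqref{y3}, the defining ratios $y_{n+i}=x_{n+i+2m}x_{n+i+1}\big/(x_{n+i+2m-1}x_{n+i+2})$ into the denominator, and the defining ratios for $y_{n+2m-2}$ and $y_n$ into the left-hand side. After this substitution \eqref{y3} becomes an identity equating a ratio of products of the $x_n$ to $\beta\gamma^{(-1)^n}\alpha^n$ times $\big(\prod_{i=1}^{m-1}z_{n+2i-1}\big)^{-1}$ times another ratio of products of the $x_n$. The main step is to check that these two $x$-ratios coincide: both $y_{n+2m-2}y_n$ and $\prod_{i=1}^{m-1}(y_{n+2i-1}+1)\big/\prod_{i=1}^{2m-3}y_{n+i}$ telescope to $x_{n+4m-2}x_{n+1}\big/(x_{n+4m-3}x_{n+2})$, once the consecutive runs of indices in the $\prod y_{n+i}$ part and the alternating runs of indices in the $\prod(y_{n+2i-1}+1)$ part are cancelled. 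Hence \eqref{y3} is equivalent to
\begin{equation}
\prod_{i=1}^{m-1}z_{n+2i-1}=\beta\gamma^{(-1)^n}\alpha^n.
\end{equation}

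Finally I would verify this last identity directly from \eqref{z2}. The indices $n+1,n+3,\dots,n+2m-3$ all have parity opposite to that of $n$, so they are all odd when $n$ is even and all even when $n$ is odd, and a single line of \eqref{z2} applies to every factor. The $(\beta/\gamma)^{1/(m-1)}$ or $(\beta\gamma)^{1/(m-1)}$ prefactors multiply to $\beta\gamma^{-1}$ or $\beta\gamma$, matching $\gamma^{(-1)^n}$; the exponent of $\alpha$ is $(m-1)^{-1}$ times a sum of an arithmetic progression in $i$, which evaluates to exactly $n$ in both parity cases. This gives $\prod_{i=1}^{m-1}z_{n+2i-1}=\beta\gamma^{(-1)^n}\alpha^n$ and finishes the proof.

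I expect the only real work to be the bookkeeping in the middle paragraph: tracking the index ranges so that the products of the $x_n$ telescope exactly, and then performing the arithmetic-series evaluation of the $\alpha$-exponent while keeping the two parities of $n$ straight. There is no conceptual obstacle — everything is forced once \eqref{x2} is rewritten as above.
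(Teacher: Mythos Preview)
Your approach is correct and is precisely the direct verification the paper tacitly leaves to the reader (no proof is given in the paper for this proposition or its even-$k$ analogue). One slip: with $k=2m+1$ the rewritten bilinear relation should read $y_j+1=x_{j+2m+1}x_j\big/(z_j\,x_{j+2m-1}\,x_{j+2})$, not $x_{j+2m}x_j\big/(z_j\,x_{j+2m-2}\,x_{j+2})$; your subsequent telescoped result $x_{n+4m-2}x_{n+1}\big/(x_{n+4m-3}x_{n+2})$ and the product identity $\prod_{i=1}^{m-1}z_{n+2i-1}=\beta\gamma^{(-1)^n}\alpha^n$ are nonetheless correct, so the error does not propagate.
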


\section{Co-primeness of generalized $q$-discrete Painlev\'e I, II equations}

We give some definitions necessary for the proof of co-primeness of the generalized $q$-discrete Painlev\'{e} equations.
\begin{dfn}\cite{7}
A Laurent polynomial $f\in R:=\Z[x_1^{\pm},\dots,x_n^{\pm}]$ is irreducible if, for every decomposition $f=gh\ (g,h\in R)$, at least one of $g$ or $h$ is a unit in $R$.
\end{dfn}
\begin{dfn}\label{dfn1}\cite{7}
Two Laurent polynomials $f_1,f_2$ are co-prime in $R:=\Z[x_1^{\pm},\dots,x_n^{\pm}]$ if the following condition is satisfied: we have decompositions $f_1=g_1h,f_2=g_2h$ in $R$, then $h$ must be a unit in $R$.
\end{dfn}
\begin{dfn}\cite{7}
Two rational functions $f_1,f_2$ are co-prime in the field $F:=\Q(y_1,\dots,y_n)$ if the following condition is satisfied: let us express $f_1,f_2$ as $f_1=g_1/g_2,f_2=h_1/h_2$ where $g_1,g_2,h_1,h_2\in\Q[y_1^{\pm},\dots,y_n^{\pm}]$, $(g_1,g_2)$ and $(h_1,h_2)$ are co-prime pairs of polynomials.
Then, every pair of polynomials $(g_i,h_j)\ (i,j=1,2)$ is co-prime in the sense of Definition \ref{dfn1}.
\end{dfn}
The following theorem is the main result of the present article:
\begin{thm}\label{thm0}
For $k=4$, we define rational functions $y_n\in\Q(y_0,y_{1},\alpha,\beta)$ by the difference equation \eqref{y2}.
If $|n-n'|>2$, then $y_n$ and $y_{n'}$ are co-prime.

For $k= 5$, we define rational functions $y_n\in\Q(y_0,y_1,\alpha,\beta,\gamma)$ by the difference equation \eqref{y3}.
If $|n-n'|>3$, then $y_n$ and $y_{n'}$ are co-prime.
\end{thm}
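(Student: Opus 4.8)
The plan is to reduce the co-primeness statement for the $y_n$ to a co-primeness statement for the cluster variables $x_n$, where we can exploit the Laurent phenomenon (Proposition \ref{LP2}) and the irreducibility machinery of \cite{7}. The starting point is the substitution $y_n = x_{n+k-1}x_{n+1}/(x_{n+k-2}x_{n+2})$ together with the bilinear relation \eqref{x2}: these express every $y_n$ as a ratio of two Laurent monomials in consecutive $x$'s, and the $x_n$ themselves are Laurent polynomials in $x_1,\dots,x_k$ with coefficients that are (up to the $z_n$ factors, which are units after adjoining $\alpha^{1/\cdot},\beta^{1/\cdot},\gamma^{1/\cdot}$) in $\Z[\alpha^{\pm},\beta^{\pm},\gamma^{\pm}]$. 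So the first step is to set up the correct ambient Laurent ring $R = \Z[\alpha^{\pm},\beta^{\pm},\gamma^{\pm},x_1^{\pm},\dots,x_k^{\pm}]$ (for $k=4$ omit $\gamma$), check that the $z_n$ become units there, and verify that the map $\Q(y_0,y_1,\alpha,\beta,\gamma)\to \mathrm{Frac}(R)$ induced by the substitution is well-defined and that co-primeness in $\mathrm{Frac}(R)$ of the images implies co-primeness of $y_n,y_{n'}$ in the original field (this is essentially the content of the third definition, plus the observation that the substitution is a ring homomorphism into a localization).

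The second, and central, step is to establish the irreducibility and pairwise co-primeness of the $x_n$ as elements of $R$. Concretely I would prove: (i) each $x_n$ for $n>k$ is an irreducible Laurent polynomial in $R$, not a unit, and is not divisible by any $x_i$; and (ii) $x_n$ and $x_{n'}$ are co-prime in $R$ whenever $n\ne n'$. The natural tool is induction on $n$ using the exchange relation $x_{n+k}x_n = z_n(x_{n+k-1}x_{n+1}+x_{n+k-2}x_{n+2})$ read as a definition of $x_{n+k}$. Reducing modulo an earlier variable $x_m$ (i.e. specializing $x_m\to 0$) kills one term of the bracket when $m$ lies in the appropriate window and leaves $x_{n+k}|_{x_m=0}$ equal to a unit times a product of two earlier $x$'s divided by $x_n|_{x_m=0}$; by the inductive co-primeness hypothesis this is a nonzero Laurent polynomial, so $x_m\nmid x_{n+k}$. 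For irreducibility one argues that a nontrivial factorization of $x_{n+k}$ would, after such a specialization, force a factorization of a product of pairwise-coprime irreducibles, and then one tracks which $x_i$'s actually occur (using that $x_{n+k}$ genuinely depends on $x_n$, visible from the $x_n$-degree). This part follows the template already used by Kanki et al. and by the author in \cite{4,5} for Somos-type and higher discrete Painlevé recurrences; the matrices $B_k$ were chosen precisely so that the same bracket structure $x_{n+k-1}x_{n+1}+x_{n+k-2}x_{n+2}$ appears, so the argument should transfer.

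The third step is the back-translation: given $|n-n'|>2$ (resp. $>3$), show that the numerator $N_n := z_n^{-1}\,$-normalized monomial $x_{n+k-1}x_{n+1}$ and denominator $D_n := x_{n+k-2}x_{n+2}$ of $y_n$ — after clearing the common factors to make $(N_n,D_n)$ a coprime pair, which here just means they are already coprime monomials since no two of $x_{n+1},x_{n+2},x_{n+k-2},x_{n+k-1}$ coincide and all four are pairwise coprime irreducibles — share no irreducible factor with $N_{n'}$ or $D_{n'}$. The index condition $|n-n'|>k-2$ guarantees that the four-element index sets $\{n+1,n+2,n+k-2,n+k-1\}$ and $\{n'+1,n'+2,n'+k-2,n'+k-1\}$ are disjoint, so by the pairwise co-primeness of the $x$'s from step two the four monomials are pairwise coprime across $n$ and $n'$; this is exactly the co-primeness of $y_n,y_{n'}$. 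For $k=4$ the window has length $2$ and for $k=5$ length $3$, matching the stated thresholds, and one should double-check the small-overlap boundary cases (e.g. $n' = n+k-1$) by hand.

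The main obstacle I anticipate is step two, specifically the irreducibility of the $x_n$: the specialization-and-induction argument requires knowing precisely which initial variables $x_i$ occur in $x_n$ and with what leading behavior, and the non-autonomous coefficients $z_n$ (fractional powers of $\alpha,\beta,\gamma$) complicate the bookkeeping compared with the autonomous Somos case. One must be careful that adjoining these fractional powers does not introduce new units or ramification that breaks irreducibility, and that the $z_n$ are genuinely invertible and algebraically independent of the $x_i$ in the chosen ring. Once the $x_n$ are known to be irreducible and pairwise coprime in $R$, steps one and three are essentially formal.
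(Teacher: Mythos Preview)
Your overall strategy---lift the $y_n$ to the cluster variables $x_n$ via $y_n = x_{n+k-1}x_{n+1}/(x_{n+k-2}x_{n+2})$, prove that the $x_n$ are irreducible and pairwise co-prime Laurent polynomials, then read off co-primeness of $y_n,y_{n'}$ from disjointness of the index windows $\{n+1,n+2,n+k-2,n+k-1\}$---is exactly the route the paper takes. Your step three matches the paper's final paragraph essentially verbatim (one small slip: for $k=4$ the indices $n+2$ and $n+k-2$ coincide, so the denominator is $x_{n+2}^2$, but this is harmless).

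The tactical differences are in step two and in the handling of the non-autonomous coefficients. For irreducibility of the $x_n$ the paper does \emph{not} use $x_m\to 0$ specializations; instead it substitutes $x_1=\cdots=x_{k-1}=1$, $x_k=t$, reduces the question (via Lemma~\ref{lem}) to ruling out a factor of $x_{k+1}=t+1$, and checks this by identifying $u_j$ with linear combinations of Fibonacci numbers whose coefficients are consecutive (hence co-prime). The cases $j\ge 2k+2$ are then handled by combining Lemma~\ref{lem} applied to several shifted initial windows with the elementary growth bound $\tilde{x}_n>2\tilde{x}_{n-1}$ for the all-ones initial data, which rules out $x_j = (\text{unit})\cdot x_{k+1}x_i$. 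Your specialization-to-zero scheme is a plausible alternative, but since each $x_m$ is a unit in the Laurent ring you would need to pass to a polynomial model first, and the bookkeeping you allude to is precisely what Lemma~\ref{lem} packages.

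For the non-autonomous $z_n$, the paper avoids your fractional-power worry entirely: it simply specializes $\alpha=\beta=\gamma=1$, so that $z_n\equiv 1$ and one is back in the autonomous bilinear recursion \eqref{x1}, and observes that a common factor in the non-autonomous case would survive this specialization. This is both shorter and cleaner than working in a ring with adjoined roots, and it is the main simplification you would gain by following the paper's line.
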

Accordingly we have two conjectures.
\begin{conj}\label{thm1}
For $k\geq 6$ is even, we define rational functions $y_n\in\Q(y_0,\dots,y_{k-3},\alpha,\beta)$ by the difference equation \eqref{y2}.
If $|n-n'|>k-2$, then $y_n$ and $y_{n'}$ are co-prime.
\end{conj}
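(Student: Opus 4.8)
The plan is to reduce everything to the bilinear (tau-function) picture and then transfer co-primeness statements through the substitution $y_n = x_{n+k-1}x_{n+1}/(x_{n+k-2}x_{n+2})$. First I would recall that by the last Proposition of Section 3 (the bilinear form of \eqref{y2}), if $x_n$ satisfies \eqref{x2} with $z_n$ as in \eqref{z1}, then the $y_n$ defined above satisfy \eqref{y2}; conversely, given generic initial data $y_0,\dots,y_{k-3}$ one can lift to a tau-sequence $x_n$ (introducing a few extra free parameters, which does no harm since co-primeness in a larger polynomial ring implies it in the specialization). So it suffices to work in $R=\Z[x_1^\pm,\dots,x_k^\pm]$, possibly extended by the parameters $\alpha,\beta$, and prove: (i) each $x_n$ is irreducible in $R$; (ii) $x_n$ and $x_{n'}$ are co-prime for $n\neq n'$ in the relevant range; and (iii) no $x_n$ divides $x_{n'+k-2}x_{n'+1}$ or $x_{n'+k-2}x_{n'+2}$ in a way that would create a common factor in the fraction for $y_{n'}$.

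The key steps, in order: (1) Establish the Laurent property for the non-autonomous bilinear equation \eqref{x2} — this follows from the cluster-algebra origin (Theorem~\ref{LP}) together with the observation that the $z_n$ factors are units once $\alpha,\beta$ are adjoined, so each $x_n\in\Z[x_1^\pm,\dots,x_k^\pm,\alpha^\pm,\beta^\pm]$. (2) Prove irreducibility of each $x_n$ by induction on $n$: assume $x_1,\dots,x_{n+k-1}$ are irreducible and pairwise co-prime, write $x_{n+k}x_n = z_n(x_{n+k-1}x_{n+1}+x_{n+k-2}x_{n+2})$, and analyze the right-hand side; the standard technique (following Kanki et al.\cite{7}) is to specialize one initial variable, say $x_n\to$ a value that makes $x_n$ "visible" as a prime, and show the image of $x_{n+k}$ stays irreducible, using that $x_{n+k-1}x_{n+1}$ and $x_{n+k-2}x_{n+2}$ share no common factor. (3) For the co-primeness of $x_n$ and $x_{n'}$ with $n<n'$: again induct, reducing the gap using the recurrence, and at the base cases use that consecutive (and near-consecutive) $x_i$ are co-prime because the recurrence expresses each new variable as a sum of two coprime monomials in strictly earlier variables divided by $x_n$. (4) Translate back: $y_n$ and $y_{n'}$ co-prime as rational functions amounts to the numerators $x_{n+k-1}x_{n+1}$, $x_{n'+k-1}x_{n'+1}$ and denominators $x_{n+k-2}x_{n+2}$, $x_{n'+k-2}x_{n'+2}$ being pairwise co-prime after cancellation; when $|n-n'|>k-2$ the index sets $\{n+1,n+2,n+k-2,n+k-1\}$ and $\{n'+1,n'+2,n'+k-2,n'+k-1\}$ are disjoint, so pairwise co-primeness of the $x_i$'s gives the result immediately.

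The main obstacle will be step (2), the irreducibility of the individual iterates $x_n$ of the non-autonomous bilinear recurrence \eqref{x2}. Unlike the autonomous Somos-type case, the presence of the factor $z_n$ — which is a fractional power of $\alpha$ and $\beta$ — forces one to work in a ring adjoining roots of $\alpha,\beta$, and one must check carefully that $z_n$ and $z_{n-1}$ etc. do not conspire to produce spurious common factors when the recurrence is iterated; concretely, the delicate point is showing that in $x_{n+k}x_n = z_n(x_{n+k-1}x_{n+1}+x_{n+k-2}x_{n+2})$ the sum on the right is not divisible by $x_n$ (so that $x_{n+k}$ picks up no factor of $x_n$) and is itself irreducible up to units. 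This is where an induction hypothesis listing the full package of properties — Laurent-ness, irreducibility, and pairwise co-primeness of a sliding window of $2k-1$ consecutive iterates — must be set up and propagated, and verifying the base case (say for $x_1,\dots,x_{2k-1}$) is a finite but genuinely case-dependent computation. For $k=4,5$ this is manageable by hand, which is why Theorem~\ref{thm0} is proved while the general even/odd statements remain Conjecture~\ref{thm1} and its odd counterpart.
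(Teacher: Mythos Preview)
The statement you are trying to prove is labelled \emph{Conjecture} in the paper, and indeed the paper gives no proof of it; the authors prove only the cases $k=4,5$ (Theorem~\ref{thm0}) and explicitly remark that ``for $k\ge 6$, a similar approach does not hold''. Your proposal is essentially the paper's method for $k=4,5$, so the right question is whether your outline closes the gap that stops the paper's argument.

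It does not, and you have misidentified where the obstruction lies. You single out step~(2), irreducibility of the $x_n$ for the non-autonomous bilinear recurrence, as the ``main obstacle''. But the paper already handles this for \emph{all} $k\ge 4$: Proposition~\ref{prop1} proves irreducibility and pairwise co-primeness of every $x_n$ for the autonomous recurrence~\eqref{x1}, and Proposition~\ref{prop2} transports this to the non-autonomous recurrence~\eqref{x2} by specialising $\alpha=\beta=\gamma=1$. The paper even states that the conjectures \emph{are} true in $\Q(x_1,\dots,x_k)$. So your steps (1)--(3) are done, uniformly in $k$; your index-disjointness observation in step~(4) is correct as well.

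The genuine gap is the descent you gloss over at the start: passing from co-primeness in $\Q(x_1,\dots,x_k)$ to co-primeness in $\Q(y_0,\dots,y_{k-3})$. For $k=4$ and $k=5$ the initial $y$-data are Laurent \emph{monomials} in the initial $x$-data (e.g.\ $y_0=x_3x_1/x_2^2$, $y_1=x_4x_2/x_3^2$ for $k=4$), so after adjoining two spare $x$-variables one gets a monomial birational change of coordinates, and irreducibility transfers via Lemma~\ref{lem}. For $k\ge 6$ this fails: the highest index appearing in $y_{k-3}=x_{2k-4}x_{k-2}/(x_{2k-5}x_{k-1})$ is $2k-4>k$, so $y_2,\dots,y_{k-3}$ already involve $x_{k+1},x_{k+2},\dots$, which are non-monomial Laurent polynomials in $x_1,\dots,x_k$. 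There is then no monomial (or even obviously birational) change between $(x_1,\dots,x_k)$ and $(x_1,x_2,y_0,\dots,y_{k-3})$, and Lemma~\ref{lem} does not apply. Your sentence ``introducing a few extra free parameters, which does no harm since co-primeness in a larger polynomial ring implies it in the specialization'' is precisely the step that needs justification and, as stated, points in the wrong direction: co-primeness in an extension does not in general descend to a subring, and here the embedding $\Q[y_0^{\pm},\dots,y_{k-3}^{\pm}]\hookrightarrow\Q[x_1^{\pm},\dots,x_k^{\pm}]$ is not of a form where irreducibles go to irreducibles. That descent is the missing idea, and without it the argument remains a heuristic for $k\ge 6$, exactly as the paper acknowledges.
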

\begin{conj}\label{thm2}
For $k=2m+1\geq 7$ is odd, we define rational functions $y_n\in\Q(y_0,\dots,y_{2m-3},\alpha,\beta,\gamma)$ by the difference equation \eqref{y3}.
If $|n-n'|>k-2$, then $y_n$ and $y_{n'}$ are co-prime.
\end{conj}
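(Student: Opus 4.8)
The plan is to reduce the co-primeness of the rational iterates $y_n$ to a statement about the cluster variables $x_n$, which we can control via the Laurent phenomenon (Proposition \ref{LP2}). Recall that, after introducing the non-autonomous gauge $z_n$ of \eqref{z1} (resp.\ \eqref{z2}), we have $y_n = x_{n+k-1}x_{n+1}/(x_{n+k-2}x_{n+2})$ with the $x_n$ satisfying the bilinear recurrence \eqref{x2}. So I would first pass to the polynomial ring $\Z[x_1^{\pm},\dots,x_k^{\pm}]$ (adjoining the needed roots of $\alpha,\beta,\gamma$ to the ground ring, or working over $\Q$), express $y_n$ as a ratio of monomials in the $x_j$, and then the co-primeness of $y_n,y_{n'}$ in the function field becomes the statement that the four pairwise combinations of the numerator/denominator factors $x_{n+k-1},x_{n+1},x_{n+k-2},x_{n+2}$ and $x_{n'+k-1},x_{n'+1},x_{n'+k-2},x_{n'+2}$ are co-prime in the Laurent polynomial ring. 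For $|n-n'|>k-2$ the index sets $\{n+1,n+2,n+k-2,n+k-1\}$ and $\{n'+1,n'+2,n'+k-2,n'+k-1\}$ are disjoint, so this will follow once we know two things about the sequence $(x_n)$: (i) each $x_n$ is an irreducible element of $\Z[x_1^{\pm},\dots,x_k^{\pm}]$, and (ii) $x_n$ and $x_{n'}$ are co-prime (i.e.\ non-associate irreducibles) whenever $n\neq n'$.

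To establish (i) and (ii) I would argue by induction on $n$ using the bilinear exchange relation $x_{n+k}x_n = z_n(x_{n+k-1}x_{n+1}+x_{n+k-2}x_{n+2})$. The base cases $x_1,\dots,x_k$ are the generators, hence irreducible and pairwise co-prime. For the inductive step, suppose $x_1,\dots,x_{n+k-1}$ are known to be irreducible and pairwise co-prime; then $x_{n+k}$ is, by Proposition \ref{LP2}, a Laurent polynomial, and the relation exhibits $x_{n+k}x_n$ as $z_n(x_{n+k-1}x_{n+1}+x_{n+k-2}x_{n+2})$. The key algebraic input is that $x_n$ divides the right-hand side $P_n:=x_{n+k-1}x_{n+1}+x_{n+k-2}x_{n+2}$, and that $P_n/x_n$ is irreducible; equivalently, that $P_n$ has exactly $x_n$ as a repeated-free factor among the previously constructed variables and is otherwise irreducible. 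One shows $x_n \nmid x_{n+k-1}x_{n+1}$ and $x_n\nmid x_{n+k-2}x_{n+2}$ separately (using co-primeness of distinct $x_j$, which is the inductive hypothesis), so the two monomials are genuinely distinct modulo units; then the divisibility $x_n\mid P_n$ forces $P_n = x_n\cdot x_{n+k}$ with $x_{n+k}$ a single Laurent polynomial, and one checks irreducibility of $x_{n+k}$ by a degree/specialisation argument — e.g.\ specialising all but one variable and verifying that $P_n$ stays squarefree and of minimal content. Co-primeness of $x_{n+k}$ with each earlier $x_j$ then follows because any common factor would have to divide $P_n = z_n^{-1}x_{n+k}x_n$, hence divide $x_{n+k-1}x_{n+1}+x_{n+k-2}x_{n+2}$, and one rules this out using that $x_j$ is co-prime with $x_{n+1},x_{n+2},x_{n+k-2},x_{n+k-1}$ except in the degenerate cases, which are handled by hand for the small indices where the four-term window overlaps.

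The final assembly is then bookkeeping: given $|n-n'|>k-2$, list the two quadruples of indices appearing in $y_n$ and $y_{n'}$, note they are disjoint, invoke (i)–(ii) to conclude that numerator and denominator of $y_n$ share no irreducible factor with numerator or denominator of $y_{n'}$, and read off co-primeness in $F$ from the definition. The constants $\alpha,\beta$ (and $\gamma$) enter only through the units $z_n$, which are invertible and therefore irrelevant to divisibility; I would remark that this is why they may be adjoined freely. The main obstacle I anticipate is step (i)/(ii) in the form of the irreducibility of $x_{n+k}$ and, more delicately, the base-level co-primeness checks in the regime where the window indices $\{n+1,n+2,n+k-2,n+k-1\}$ are not all distinct or nearly coincide — for $k=4$ one has $n+k-2=n+2$, so $y_n = x_{n+3}x_{n+1}/x_{n+2}^2$ and the relevant bound becomes $|n-n'|>2$, while for $k=5$ the four indices are genuinely distinct and the bound is $|n-n'|>3$; these two small cases have to be verified by an explicit finite computation of a few iterates before the induction takes over, and getting the squarefreeness of $P_n$ exactly right is where the real work lies.
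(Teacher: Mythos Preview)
The statement you were given is Conjecture~\ref{thm2} (odd $k\ge 7$), which the paper does \emph{not} prove; it is left open. Your write-up is essentially the paper's proof of Theorem~\ref{thm0} (the cases $k=4,5$): lift to the cluster variables $x_n$, use the Laurent phenomenon and a specialisation/induction argument to show each $x_n$ is irreducible and distinct $x_n$'s are co-prime in $\Z[x_1^{\pm},\dots,x_k^{\pm}]$, and then read off co-primeness of $y_n,y_{n'}$ from the disjointness of the index windows. That much is fine, and indeed the paper remarks at the end of Section~4 that this argument shows the conjectures hold in $\Q(x_1,\dots,x_k)$.

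The genuine gap is that Conjecture~\ref{thm2} asks for co-primeness in $\Q(y_0,\dots,y_{2m-3},\alpha,\beta,\gamma)$, not in $\Q(x_1,\dots,x_k)$, and your proposal never addresses the descent. The factorisation $y_n=x_{n+k-1}x_{n+1}/(x_{n+k-2}x_{n+2})$ lives in the $x$-ring; the individual $x_j$ are not elements of the $y$-field, so their pairwise co-primeness says nothing directly about the irreducible factorisation of numerator and denominator of $y_n$ as polynomials in $y_0,\dots,y_{2m-3}$. For $k=4,5$ the paper closes this gap with a specific trick: one has a \emph{monomial} change of variables (e.g.\ $(x_1,x_2,x_3,x_4)\leftrightarrow(x_1,x_2,y_0,y_1)$ for $k=4$) identifying $\Z[x_1^{\pm},\dots,x_k^{\pm}]$ with $\Z[x_1^{\pm},x_2^{\pm},y_0^{\pm},y_1^{\pm}]$, so irreducibility and co-primeness transfer verbatim, and then one observes that $y_n$ depends only on the $y$-block. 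For $k\ge 6$ this fails: the initial data $y_0,\dots,y_{2m-3}$ already involve non-initial cluster variables $x_{k+1},x_{k+2},\dots$ through the recurrence, so there is no invertible monomial change of coordinates between the $x$-ring and any ring of the form $\Z[x_1^{\pm},\dots,x_r^{\pm},y_0^{\pm},\dots]$. The paper says exactly this (``For $k\ge 6$, a similar approach does not hold''). Your bookkeeping step therefore proves the wrong statement, and the missing idea --- how to control the factorisation of $y_n$ intrinsically in the $y$-variables --- is precisely what makes this a conjecture rather than a theorem.
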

We use the following lemma\cite{7}.
\begin{lem}\label{lem}
Let $\{p_1,p_2,\dots,p_n\}$ and $\{q_1,q_2,\dots,q_n\}$ be two sets of independent variables with the following properties:
\begin{equation}
p_i\in\Z[q_1^{\pm},q_2^{\pm},\dots,q_n^{\pm}],
\end{equation}
\begin{equation}
q_i\in\Z[p_1^{\pm},p_2^{\pm},\dots,p_n^{\pm}],
\end{equation}
\begin{equation}
q_i\ \text{is irreducible as an element of}\ \Z[p_1^{\pm},p_2^{\pm},\dots,p_n^{\pm}],
\end{equation}
for $i=1,2,\dots,n$.
Let us take an irreducible Laurent polynomial
\begin{equation}
f(p_1,p_2,\dots,p_n)\in\Z[p_1^{\pm},p_2^{\pm},\dots,p_n^{\pm}],
\end{equation}
and another Laurent polynomial
\begin{equation}
g(q_1,q_2,\dots,q_n)\in\Z[q_1^{\pm},q_2^{\pm},\dots,q_n^{\pm}],
\end{equation}
which satisfies
\begin{equation}
f(p_1,p_2,\dots,p_n)=g(q_1,q_2,\dots,q_n).
\end{equation}
In these settings, the function $g$ is decomposed as
\begin{equation}
g(q_1,q_2,\dots,q_n)=p_1^{r_1}p_2^{r_2}\dots p_n^{r_n}\cdot\tilde{g}(q_1,q_2,\dots,q_n),
\end{equation}
where $r_1,r_2,\dots,r_n\in\Z$ and $\tilde{g}(q_1,q_2,\dots,q_n)$ is irreducible in $\Z[q_1^{\pm},q_2^{\pm},\dots,q_n^{\pm}]$.
\end{lem}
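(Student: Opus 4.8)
The plan is to recast the statement in the language of unique factorization domains and to introduce a single localization that contains both Laurent rings at once. Write $R_p := \Z[p_1^\pm,\dots,p_n^\pm]$ and $R_q := \Z[q_1^\pm,\dots,q_n^\pm]$; each is a localization of a polynomial ring over $\Z$, hence a UFD, and in each the unit group consists exactly of the signed monomials, $R_p^\times = \{\pm \prod_i p_i^{c_i}: c_i\in\Z\}$ and $R_q^\times = \{\pm \prod_i q_i^{c_i}: c_i\in\Z\}$. Since $q_1,\dots,q_n \in R_p$ by hypothesis, I form the localization $T := R_p[q_1^{-1},\dots,q_n^{-1}]$, i.e.\ $R_p$ localized at the multiplicative set of monomials in the $q_i$. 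Being a localization of a UFD, $T$ is again a UFD, and it contains both $R_p$ and $R_q$ (the latter because $q_i^{\pm}\in T$). The hypothesis that each $q_i$ is irreducible in $R_p$ is used precisely to pin down the unit group: localizing at the $q_i$ adjoins to $R_p^\times$ exactly the classes of the $q_i$, so that $T^\times = \{\pm(\prod_i p_i^{c_i})(\prod_i q_i^{d_i}) : c_i,d_i\in\Z\}$.

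Next I locate the common element $F := f(p) = g(q)$ inside $T$ and determine its factorization type. As $f$ is irreducible in the UFD $R_p$ it is prime there, and a prime of $R_p$ survives as a prime of the localization $T$ unless it becomes a unit; it becomes a unit precisely when it is $R_p$-associate to a product of the inverted primes, which—again because each $q_i$ is irreducible—means associate to a single $q_i$. I first treat the generic case in which $f$ is \emph{not} associate to any $q_i$, so that $F$ is prime in $T$; the excluded case, in which $f=\pm(\text{$p$-monomial})\,q_i$, makes $g$ a $p$-monomial times a unit of $R_q$ and is the degenerate instance of the conclusion.

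With $F$ prime in $T$, I push the $R_q$-factorization of $g$ through $T$. Factor $g$ in the UFD $R_q$ as $g = w\,h_1^{a_1}\cdots h_s^{a_s}$ with $w\in R_q^\times$ and the $h_j$ distinct irreducibles of $R_q$. Mapping this equality into $T$ and comparing with $F = g$, which is associate to the single prime $f$, uniqueness of factorization in $T$ forces exactly one factor—say $h_1$, with $a_1=1$—to be $T$-associate to $f$, while every remaining $h_j$ is a unit of $T$. Each such $h_j$ then lies in $R_q\cap T^\times$, hence equals $\pm(\prod_i p_i^{c_i})(\prod_i q_i^{d_i})$; since $h_j$ is moreover irreducible in $R_q$ and the $q_i$-part is an $R_q$-unit, this forces the $p$-part to be a single variable $p_{i_j}$ (a higher or mixed positive $p$-power would factor in $R_q$ through the memberships $p_i\in R_q$, and negative powers would make $h_j$ an $R_q$-unit). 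Collecting the pieces, the factors $h_j\ (j\ge 2)$ contribute a monomial in the $p_i$, the $T$-unit relating $h_1$ to $f$ contributes a further $p$-monomial (possibly with negative exponents, whence $r_i\in\Z$), and the leftover signs and $q$-monomials are units of $R_q$ that I absorb into the surviving factor. This yields $g = \big(\prod_i p_i^{r_i}\big)\,\tilde g$ with $\tilde g$ being $R_q$-associate to $h_1$, hence irreducible in $R_q$, which is the claim.

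The main obstacle I anticipate is the unit bookkeeping that underlies every step: one must cleanly separate the three kinds of units that appear—signs, $q$-monomials (units of $R_q$, to be swallowed by $\tilde g$), and $p$-monomials (which alone assemble into the prefactor $\prod_i p_i^{r_i}$)—and, crucially, verify the two structural facts about $T$, namely that localizing $R_p$ at the irreducible $q_i$ enlarges the unit group by precisely the $q_i$, and that an element irreducible in $R_q$ yet a unit in $T$ can only be a single $p$-variable up to $R_q$-units. Establishing these two facts rigorously—rather than the subsequent factor-counting, which is then routine—is where the irreducibility hypothesis on the $q_i$ does its real work.
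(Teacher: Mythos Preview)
The paper does not give its own proof of this lemma: it is quoted verbatim from reference~\cite{7} (Kanki--Mada--Mase--Tokihiro) and used as a black box in the proof of Proposition~\ref{prop1}. So there is nothing in the paper to compare your argument against, and your proof stands or falls on its own merits.

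Your localisation strategy is sound and the generic case is handled correctly. Passing to $T=R_p[q_1^{-1},\dots,q_n^{-1}]$, identifying $T^\times$ using the irreducibility of the $q_i$ in $R_p$, and then comparing the single prime $F=f$ in $T$ against the $R_q$-factorisation of $g$ is exactly the right mechanism. Two remarks on the execution. First, your claim that each surviving factor $h_j$ (for $j\ge2$) must be a \emph{single} variable $p_{i_j}$ up to an $R_q$-unit is both unnecessary and not fully justified, since you have no hypothesis that the $p_i$ are irreducible in $R_q$; all you actually need is $h_j\in T^\times$, so that $w\prod_{j\ge2}h_j^{a_j}$ is a $p$-monomial times an $R_q$-unit, after which $\tilde g$ may simply be taken $R_q$-associate to $h_1$. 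The reference to ``the $T$-unit relating $h_1$ to $f$'' is likewise a red herring: you never need to unwind that unit. Second, your dismissal of the degenerate case $f=\pm(\text{$p$-monomial})\,q_{i_0}$ as ``the degenerate instance of the conclusion'' is too quick. In the extreme sub-case $f=\pm q_{i_0}$, the element $g$ is a unit of $R_q$, and writing it as $(\prod_i p_i^{r_i})\,\tilde g$ with $\tilde g$ irreducible in $R_q$ would require exhibiting a $p$-monomial that is irreducible in $R_q$, which the hypotheses do not obviously supply. In the applications here and in~\cite{7} this case never arises (the element $f$ is a cluster variable lying outside the new seed, hence not associate to any $q_i$), but if you want the lemma in the stated generality you should either exclude this case explicitly or supply the missing argument.
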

\begin{prop}\label{prop1}
For $k\geq4$, we define Laurent polynomials $x_n\in\Z[x_1^{\pm},x_2^{\pm},\cdots,x_k^{\pm}]$ by the bilinear equation \eqref{x1}.
If $n\neq n'$, then $x_n$ and $x_{n'}$ are co-prime.
\end{prop}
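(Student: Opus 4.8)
The plan is to prove Proposition~\ref{prop1} by reducing it, via Lemma~\ref{lem}, to an \emph{irreducibility} statement for the $x_n$, and then to establish that irreducibility by an induction on $|n-n'|$ that exploits the particular shape of the bilinear recurrence \eqref{x1}. The first observation is that for any window of $k$ consecutive indices, say $\{x_{m+1},\dots,x_{m+k}\}$, one can run the mutations backwards as well as forwards, so each $x_n$ lies in $\Z[x_{m+1}^{\pm},\dots,x_{m+k}^{\pm}]$ (this is Proposition~\ref{LP2} applied to a shifted initial seed); moreover, by the same token, each of $x_{m+1},\dots,x_{m+k}$ is a Laurent monomial-times-polynomial in any other such window. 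To apply Lemma~\ref{lem} I need, in addition, that each $x_{m+i}$ is \emph{irreducible} as a Laurent polynomial in the shifted variables — equivalently, that every cluster variable $x_n$ is irreducible in $\Z[x_1^{\pm},\dots,x_k^{\pm}]$. So the crux is:

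\emph{Claim.} For every $n$, $x_n$ is an irreducible element of $R=\Z[x_1^{\pm},\dots,x_k^{\pm}]$.

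Granting the Claim, co-primeness follows quickly. Suppose $n<n'$ and write $d=n'-n\ge 1$. Choose the window $W=\{x_{n'-k+1},\dots,x_{n'}\}$, which contains $x_{n'}$ itself (as one of the generators, hence trivially irreducible there) and expresses $x_n$ as a Laurent polynomial $f$ in the variables of $W$. If $d<k$ then $x_n$ is also one of the generators of $W$, and two distinct generators of a Laurent polynomial ring are obviously co-prime. If $d\ge k$, apply Lemma~\ref{lem} with $\{p_i\}$ the original variables $\{x_1,\dots,x_k\}$ and $\{q_i\}=W$ (the hypotheses of the lemma hold by the mutual Laurent property above and by the Claim), to the identity $f(W)=x_n\in R$: since $x_n$ is irreducible in $R$ by the Claim, the lemma tells us $x_n$, viewed in $\Z[W^{\pm}]$, is a Laurent monomial in the $q_i$ times an irreducible. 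But $x_{n'}$ is one of the $q_i$, so the only way $x_{n'}$ could divide $x_n$ in $\Z[W^{\pm}]$ is through that monomial factor; one then checks — using that $x_n$ is a genuine polynomial (not a Laurent monomial) in the original variables, together with the fact that mutation relations never allow $x_{n'}$ to appear in the monomial part of an earlier $x_n$ — that $x_{n'}\nmid x_n$, giving co-primeness. (The symmetric case $n'<n$ is identical after reflecting the sequence.)

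The main obstacle is the Claim, i.e.\ irreducibility of each $x_n$. The natural route is induction on $n$ using \eqref{x1} in the form $x_{n+k}=(x_{n+k-1}x_{n+1}+x_{n+k-2}x_{n+2})/x_n$. Working in $R'=\Z[x_{n+1}^{\pm},\dots,x_{n+k}^{\pm}]$ (shift so these are the ``new'' variables), $x_n$ equals $(x_{n+k-1}x_{n+1}+x_{n+k-2}x_{n+2})/x_{n+k}$, which up to the unit $x_{n+k}^{-1}$ is the polynomial $P:=x_{n+k-1}x_{n+1}+x_{n+k-2}x_{n+2}$. Since $k\ge 4$, the four variables $x_{n+1},x_{n+2},x_{n+k-1},x_{n+k}$ occurring here are distinct, and $P$ is a sum of two coprime monomials in distinct variables, hence manifestly irreducible over $\Z$; thus $x_n$ is irreducible in $R'$. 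The remaining work is to transfer this irreducibility back to the fixed ring $R=\Z[x_1^{\pm},\dots,x_k^{\pm}]$: by the Laurent phenomenon $x_n\in R$, and one applies Lemma~\ref{lem} once more (with $\{p_i\}=R'$-variables, $\{q_i\}=R$-variables, $f=x_n/x_{n+k}^{?}$ the irreducible element of $\Z[(R')^{\pm}]$ just found, expanded back in the original $q_i$) to conclude that $x_n$ is a monomial-times-irreducible in $R$; finally one rules out a nontrivial monomial factor by a parity/leading-term argument on \eqref{x1}, e.g.\ by exhibiting a specialization (such as $x_i\mapsto 1$) at which $x_n$ does not vanish, so no $x_i$ divides it. I expect the bookkeeping in this last transfer — keeping track of which window is ``new'' and verifying the lemma's hypotheses symmetrically in both directions — to be the fiddly part, but there is no genuine obstruction: the structure of \eqref{x1}, with its two-term right-hand side in distinct variables, is exactly what makes every iterate irreducible.
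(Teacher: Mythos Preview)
Your overall architecture is close to the paper's: both arguments hinge on Lemma~\ref{lem} applied to shifted windows of $k$ consecutive iterates, and both reduce the co-primeness statement to irreducibility of every $x_n$ in $R=\Z[x_1^{\pm},\dots,x_k^{\pm}]$ together with a check that distinct $x_n$'s are non-associate. Your observation that $x_n$ is \emph{manifestly} irreducible in the immediately adjacent window (since $x_n\cdot x_{n+k}=x_{n+k-1}x_{n+1}+x_{n+k-2}x_{n+2}$ is a two-term binomial) is a nice shortcut; the paper instead computes the first two blocks $x_{k+1},\dots,x_{3k+1}$ explicitly via two specializations before invoking the lemma.

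There is, however, a real gap in your ``transfer'' step. After applying Lemma~\ref{lem} with $\{p_i\}=\{x_{n+1},\dots,x_{n+k}\}$ and $\{q_i\}=\{x_1,\dots,x_k\}$, you obtain
\[
x_n \;=\; \prod_{i=n+1}^{n+k} x_i^{\,r_i}\cdot \tilde g,\qquad \tilde g\ \text{irreducible in }R,
\]
and you must force $r_i=0$ whenever $x_i$ is not a unit in $R$, i.e.\ for $i\notin\{1,\dots,k\}$. Your proposed device --- ``exhibit a specialization (such as $x_i\mapsto 1$) at which $x_n$ does not vanish, so no $x_i$ divides it'' --- does not do this: it only rules out divisibility by the \emph{original generators} $x_1,\dots,x_k$, which are units in $R$ anyway. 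What you actually need is that no earlier iterate $x_m$ ($k<m<n$) divides $x_n$, equivalently that distinct $x_m$'s are non-associate in $R$; non-vanishing at the all-ones point says nothing about this. The paper supplies exactly this missing arithmetic input: under $x_1=\cdots=x_k=1$ it proves the growth estimate $\tilde x_n>2\tilde x_{n-1}$, which rules out $\tilde x_j=2\tilde x_i$ and hence any relation $x_j=(\text{unit})\cdot x_{k+1}x_i$, and for the first block it uses the one-parameter specialization $x_1=\cdots=x_{k-1}=1,\ x_k=t$ together with Fibonacci co-primeness to exclude the factor $x_{k+1}=t+1$. Without an argument of this type your induction does not close. (A secondary slip: for $k=4$ the four indices $n+1,n+2,n+k-2,n+k-1$ are \emph{not} all distinct --- one gets $x_{n+3}x_{n+1}+x_{n+2}^2$ --- though this polynomial is still irreducible.)
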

Note that all $x_n$ are in $\Z[x_1^{\pm},x_2^{\pm},\cdots,x_k^{\pm}]$ from Proposition \ref{LP2}.
\begin{proof}
Let $L:=\Z[x_1^{\pm},\cdots,x_k^{\pm}]$ and define the sequence $(u_n)$ which satisfies \eqref{x1} .
by substituting $x_1=\cdots=x_{k-1}=1, x_k=t$:
\begin{equation}
u_{n+k}u_n=u_{n+k-1}u_{n+1}+u_{n+k-2}u_{n+2},
\end{equation}
\begin{equation}
u_1=\cdots=u_{k-1}=1, u_k=t.
\end{equation}
We also define a sequence $(v_n)$ (Fibonacci sequence):
\begin{equation}
v_n=v_{n-1}+v_{n-2},
\end{equation}
\begin{equation}
v_1=v_2=1.
\end{equation}
Clearly $x_{k+1}$ is irreducible in $L$. 
From Lemma \ref{lem}, we have
\begin{equation}\label{aux_eq1}
x_j=x_{k+1}^af_{irr}\quad(j=k+2,\cdots,2k+1),
\end{equation}
where $a\in\Z_{\geq 0}$ and $f_{irr}$ is irreducible in $L$.
By substituting $x_1=\cdots=x_{k-1}=1, x_k=t$, \eqref{aux_eq1} turns to
\begin{equation}
u_j=u_{k+1}^a\tilde{f}_{irr}=(t+1)^a\tilde{f}_{irr}\quad(j=k+2,\cdots,2k+1).
\end{equation}
Here $\tilde{f}_{irr}$is a Laurent polynomial of $t$.
We can prove inductively
\begin{equation}
u_j=u_{j-1}+u_{j-2}=v_{j-k+1}t+v_{j-k}\quad(j=k+2,\cdots,2k-3).
\end{equation}
Since $v_i$ and $v_{i+1}$ are mutually co-prime, $u_j$ does not have a factor of $t+1$,
that implies $a=0$ and $x_j\ (j=k+2,\cdots,2k-3)$ is irreducible.

A straightforward calculation gives 
\begin{equation}
u_{2k-2}=u_{2k-3}+u_{2k-4}u_k
=u_{2k-3}+u_{2k-4}t
=v_{k-3}t^2+(v_{k-2}+v_{k-4})t+v_{k-3}.
\end{equation}
If we suppose that $u_{2k-2}$ has the following form:
\begin{equation}
u_{2k-2}=(c_1t+c_2)(t+1),
\end{equation}
then $c_1=v_{k-3}, c_1+c_2=v_{k-2}+v_{k-4}, c_2=v_{k-3}$, and $v_{k-3}=2v_{k-4}$ which is contradiction.
Therefore $x_{2k-2}$ is irreducible.

Irreducibility of $x_{2k-1},x_{2k},x_{2k+1}$ are proved in a similar manner.

Nextly, from Lemma \ref{lem},  
\begin{equation}
x_j=x_{k+1}^af_{irr}=x_{k+2}^{b_{k+2}}\cdots x_{2k+1}^{b_{2k+1}}g_{irr}\quad(j=2k+2,\cdots,3k+1),
\end{equation}
where $a,b_i\in\Z_{\geq 0}$, and $f_{irr},g_{irr}$ are irreducible in $L$.
If we suppose that $x_j$ is not irreducible $(j=2k+2,\cdots,3k+1)$, we have 
\begin{equation}
x_j=x_{k+1}x_ih\quad(i=k+2,\cdots,2k+1).
\end{equation}
Here $h$ is a unit in $L$, that is, a monic Laurent monomial of ($x_1,\cdots,x_k$).
Let $(\tilde{x}_n)$ be the sequence obtained by \eqref{x1} with initial condition $x_1=\cdots=x_k=1$, then we have
\begin{equation}
\tilde{x}_j=2\tilde{x}_i\quad(i=k+2,\cdots,2k+1).
\end{equation}
However, we can prove inductively
\begin{equation}\label{eq2}
\tilde{x}_n>2\tilde{x}_{n-1}\quad(n\geq k+3),
\end{equation}
which is contradiction.
Therefore $x_j\ (j=2k+2,\cdots,3k+1)$ are irreducible.
\begin{itemize}
\item (Proof of \eqref{eq2}): \\
Since $\tilde{x}_{k+2}=3,\,\tilde{x}_{k+3}=7$,  
it holds that $\tilde{x}_{k+3}>2\tilde{x}_{k+2}$. 
For $n \ge k+3$, suppose that $\tilde{x}_{n}>2\tilde{x}_{n-1}$.
Then we have 
\begin{equation}
\begin{aligned}
\tilde{x}_{n+1}&=\frac{\tilde{x}_{n}\tilde{x}_{n-k+2}+\tilde{x}_{n-1}\tilde{x}_{n-k+3}}{\tilde{x}_{n-k+1}}\\
&>\frac{2(\tilde{x}_{n}\tilde{x}_{n-k+2}+\tilde{x}_{n-1}\tilde{x}_{n-k+3})}{\tilde{x}_{n-k+2}}\\
&>\frac{2(\tilde{x}_{n}\tilde{x}_{n-k+2}+2\tilde{x}_{n-1}\tilde{x}_{n-k+2})}{\tilde{x}_{n-k+2}}\\
&=2(\tilde{x}_{n}+2\tilde{x}_{n-1})\\
&>2\tilde{x}_n.
\end{aligned}
\end{equation}
Hence \eqref{eq2} holds by induction.
\end{itemize}

For $j \ge 3k+2$, we use Lemma \ref{lem} again and find
\begin{equation}\label{aux_eqq3}
x_j=x_{k+1}^af_{irr}=x_{k+2}^{b_{k+2}}\cdots x_{2k+1}^{b_{2k+1}}g_{irr}=x_{2k+2}^{c_{2k+2}}\cdots x_{3k+1}^{c_{3k+1}}h_{irr}\quad(j\geq 3k+2),
\end{equation}
where $a,b_i,c_i\in\Z_{\geq 0}$ and $f_{irr},g_{irr},h_{irr}$ are irreducible in $L$.
The relation \eqref{aux_eqq3} implies $a=0,\, b_{k+2}=\cdots =b_{2k+1}=0,\, c_{2k+2}=\cdots=c_{3k+1}=0$.
Hence $x_j$ is irreducible ($j \ge 3k+2$). 
Therefore $x_n$ is irreducible in $L$ for any $n \in \Z_{\ge 0}$.
Furthermore, from \eqref{eq2}, $x_n/x_{n'}$ ($n\neq n'$) is not a Laurent monomial, $x_n$ and $x_{n'}$ are mutually prime. 
\qed
\end{proof}

\begin{prop}\label{prop2}
Let $z_n$ be given by \eqref{z1} or \eqref{z2} with $k\geq4$.
For initial variables $x_1,\cdots,x_k$, we define $x_n\in\Q(\alpha,\beta,\gamma)(x_1,\cdots,x_k)$ by \eqref{x2}.
Then $x_n$ and $x_{n'}$ are mutually co-prime for $n\neq n'$.
\end{prop}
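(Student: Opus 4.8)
The plan is to follow the argument of Proposition \ref{prop1} step by step, the only new ingredient being the coefficients $z_n$. Each $z_n$ in \eqref{z1} (resp.\ \eqref{z2}) is a Laurent monomial in $\alpha^{1/(2k-6)},\beta^{1/(k-3)}$ (resp.\ $\alpha^{1/(m-1)},\beta^{1/(m-1)},\gamma^{1/(m-1)}$), so after adjoining these radicals to $\Q(\alpha,\beta,\gamma)$ we obtain a field $\tilde K$ in which every $z_n$ is a unit; since a common non-unit factor of $x_n,x_{n'}$ over $\Q(\alpha,\beta,\gamma)[x_1^\pm,\dots,x_k^\pm]$ is still a common non-unit factor over $\tilde K[x_1^\pm,\dots,x_k^\pm]$, it suffices to work over $\tilde K$. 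Over this ring the Laurent property $x_n\in\tilde K[x_1^\pm,\dots,x_k^\pm]$ holds by the same induction as in Proposition \ref{LP2}, because the unit $z_n$ does not affect the divisibility $x_n\mid x_{n+k-1}x_{n+1}+x_{n+k-2}x_{n+2}$; and Lemma \ref{lem} remains valid with $\Z$ replaced by the field $\tilde K$, its proof using only unique factorization and a degree count.

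I would then imitate the proof of Proposition \ref{prop1}. Specializing $x_1=\dots=x_{k-1}=1,\ x_k=t$ gives a sequence $u_n\in\tilde K[t^\pm]$ obeying $u_{n+k}u_n=z_n(u_{n+k-1}u_{n+1}+u_{n+k-2}u_{n+2})$ with $u_1=\dots=u_{k-1}=1$, $u_k=t$; one finds $u_{k+1}=z_1(t+1)$, so $x_{k+1}$ is irreducible and specializes to $t+1$ up to the unit $z_1$. For $k+2\le j\le 2k-3$ the recurrence collapses to $u_j=z_{j-k}(u_{j-1}+u_{j-2})$; writing $u_j=a_jt+b_j$, the scalar $d_j:=u_j|_{t=-1}=b_j-a_j$ satisfies $d_j=z_{j-k}(d_{j-1}+d_{j-2})$ with $d_k=-1$, $d_{k+1}=0$, so $d_j$ is $-z_2$ times a polynomial in the $z_i$ with positive integer coefficients, hence $d_j\ne 0$, i.e.\ $t+1\nmid u_j$; applying Lemma \ref{lem} to $x_{k+1},x_j$ then forces the $x_{k+1}$-exponent to be zero, so $x_j$ is irreducible. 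The finitely many explicit computations for $j=2k-2,\dots,2k+1$ are carried out as in Proposition \ref{prop1}, each quantity now carrying a harmless unit factor $\prod z$, and three applications of Lemma \ref{lem} as in \eqref{aux_eqq3} give irreducibility of $x_j$ for all $j\ge 2k+2$.

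Finally, co-primeness follows from irreducibility once we know $x_n$ and $x_{n'}$ are not associates for $n\ne n'$. This is the only step that is not a formal copy of Proposition \ref{prop1}: over the field $\tilde K$ the units of $\tilde K[x_1^\pm,\dots,x_k^\pm]$ are $c\,x_1^{a_1}\cdots x_k^{a_k}$ with $c\in\tilde K^\times$, so the device of Proposition \ref{prop1} — that units of $\Z[x^\pm]$ evaluate to $\pm1$ at $x_1=\dots=x_k=1$, which combined with positivity and \eqref{eq2} rules out $x_n=\pm(\text{monomial})x_{n'}$ — is unavailable and must be replaced. The plan is to observe that each $x_n$ is homogeneous of total degree $1$ in $x_1,\dots,x_k$ (immediate from \eqref{x2}), so an associate relation $x_n=c\,x_1^{a_1}\cdots x_k^{a_k}x_{n'}$ has $\sum a_i=0$ and, after the specialization $x_1=\dots=x_{k-1}=1,\ x_k=t$, makes $u_n/u_{n'}$ a monomial in $t$; equivalently, the polynomials obtained from $u_n$ and $u_{n'}$ by dividing out the lowest power of $t$ and normalizing the constant term to $1$ coincide. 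To rule this out I would show that the $t$-width of $u_n$ (the span between its highest and lowest $t$-exponent) is strictly increasing for $n$ large, so these normalized polynomials have strictly increasing degree; the growth would be extracted from an estimate in the spirit of \eqref{eq2}, applied after the further specialization $\alpha=\gamma=1$, $\beta\in\mathbb{Q}_{>0}$ large, under which every $z_n$ becomes one and the same constant $c>1$, and the finitely many remaining pairs $(n,n')$ are checked by hand. The main obstacle is exactly this width-growth estimate: unlike the crude scalar inequality \eqref{eq2} it requires controlling the highest and lowest $t$-degrees of $u_n$ simultaneously and showing they genuinely separate for consecutive $n$, despite the only polynomial (Painlev\'e-type) degree growth of the mapping.
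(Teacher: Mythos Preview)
The paper's proof of this proposition is a single observation: setting $\alpha=\beta=\gamma=1$ in \eqref{z1}--\eqref{z2} makes every $z_n$ equal to $1$, so \eqref{x2} degenerates to the autonomous recurrence \eqref{x1} and Proposition~\ref{prop1} applies. A putative common non-unit factor of $x_n$ and $x_{n'}$ over $\Q(\alpha,\beta,\gamma)$ would survive this specialization and give a common non-unit factor of the autonomous iterates, contradicting Proposition~\ref{prop1}; no part of the irreducibility argument has to be rerun. Your plan, by contrast, repeats the whole of Proposition~\ref{prop1} over the extended field $\tilde K$. That is a legitimate strategy, and the reductions for $k+1\le j\le 2k-3$ (and, with more work, up to $2k+1$) do go through as you say, but it buys nothing here since the autonomous case is already available by specialization.

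The genuine gap in your plan is the one you flag in the last paragraph, but it bites earlier than you realize. The inequality \eqref{eq2} is used in the proof of Proposition~\ref{prop1} not only at the very end to separate $x_n$ from $x_{n'}$, but already for $2k+2\le j\le 3k+1$: there one supposes $x_j$ reducible, obtains $x_j=x_{k+1}x_i\cdot h$ with $h$ a unit, and derives a contradiction by specializing $x_1=\cdots=x_k=1$, where units of $\Z[x^\pm]$ become $\pm 1$. Over your $\tilde K$ the unit $h$ carries an arbitrary constant $c\in\tilde K^\times$, so this step fails; and the ``three applications of Lemma~\ref{lem} as in \eqref{aux_eqq3}'' for $j\ge 2k+2$ likewise presuppose that the $x_i$ with $k+1\le i\le 3k+1$ are pairwise non-associate, which is the very issue in question. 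Your $t$-width idea might salvage all of this, but it is not carried out; the immediate fix is exactly the paper's trick---specialize $\alpha=\beta=\gamma=1$, so that units again evaluate to $\pm 1$ and \eqref{eq2} applies verbatim---at which point the detour through $\tilde K$ is no longer needed.
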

\begin{proof}
From \eqref{z1} and \eqref{z2}, $\alpha=\beta=\gamma=1$ gives $z_n=1$. Then Proposition \ref{prop1} gives
Proposition \ref{prop2}. \qed
\end{proof}

Finally we prove Theorem \ref{thm1}.
\\
\textbf{ Proof of Theorem \ref{thm0} }\\
From Proposition \ref{prop2},  $x_j$ is an irreducible Laurent polynomial of $(x_1,x_2,x_3,x_4)$, and we easily find it is also irreducible in $L(x_1^\pm,x_2^\pm,y_0^\pm,y_1^\pm)$.
Hence for $|n-n'|>2$, $y_n$ and $y_{n'}$ are mutually co-prime in  $\Q (x_1,x_2,y_0,y_1,\alpha,\beta)$.
But $y_n$ and $y_{n'}$ depends only on $(y_0,y_1,\alpha,\beta)$ and the conjecture is true for $k=4$.
Similarly we can prove in the case of $k=5$.
\qed

For $k \ge 6$, a similar approach does not hold, however, we
note that the conjectures \ref{thm1} and \ref{thm2} are true in $\Q(x_1,x_2,...,x_k)$.

\section{Conclusion}
In this article we introduced generalized $q$-Painlev\'{e} equations  \eqref{y2}, \eqref{y3} with a parameter $k \in \Z_{\ge 4}$.
These $q$-discrete equations are extension of the $q$-discrete Painlev\'{e} equations in the sense that they coincide with
$q$-Painlev\'{e} equation I and II for $k=4$ and $k=5$ respectively and that they have co-primeness property which is regarded as an algebraic reinterpretation of singularity confinement.
We conjecture that the same property holds for $k \ge 6$, and
 wish to investigate the conjectures \ref{thm1} and \ref{thm2}.
Extension of the results in this article to other discrete Painlev\'{e} equations is one of the problems we wish to address  in the future.

\section*{Acknowledgments}
The author would like to thank M. Kanki, T. Mase, T. Tokihiro and R. Willox  for useful comments.


\begin{thebibliography}{99}
\bibitem{1} S. Fomin and A. Zelevinsky, \lq\lq Cluster algebras I: Foundations'', J. Amer. Math. Soc. \textbf{15} (2002), 497-529.
\bibitem{2} S. Fomin and A. Zelevinsky, \lq\lq Cluster algebras IV: Coefficients'', Compositio Mathematica \textbf{143} (2007), 112-164.
\bibitem{3} A. N. W. Hone and R. Inoue, \lq\lq Discrete Painlev\'e equations from Y-systems", J. Phys. A: Math. Theor., \textbf{47} (2014), 474007.
\bibitem{4} N. Okubo, \lq\lq Discrete integrable systems and cluster algebras", RIMS K\^oky\^uroku Bessatsu, Research Institute for Mathematical Sciences, \textbf{B41} (2013), 25-42.
\bibitem{5} N. Okubo, \lq\lq Bilinear equations and $q$-discrete Painlev\'e equations satisfied by variables and coefficients in cluster algebras", J. Phys. A: Math. Theor. \textbf{48} (2015), 355201.
\bibitem{6} B. Grammaticos, A. Ramani and V. Papageorgiou, \lq\lq Do integrable mappings have the Painlev\'e property?'', Phys. Rev. Lett. \textbf{67}, (1999) 1825-1828.
\bibitem{7} M. Kanki, J. Mada, T. Mase and T. Tokihiro, 
\lq\lq Irreducibility and co-primeness as an integrability criterion for discrete equations'', 
J. Phys. A: Math. Theor. \textbf{47} (2014), 465204.
\bibitem{8} D. Gale, \lq\lq Mathematical entertainments: the strange and surprising saga of the somos sequences'', Math. Intelligencer \textbf{13} (1991), 40–42.
\bibitem{9} B. Grammaticos and A. Ramani, \lq\lq The hunting for the discrete Painlev\'e equations", Reg. and Chaot. Dyn. \textbf{5} (2000), 53-66.
\bibitem{10} M. D. Kruskal, K. M. Tamizhmani, B. Grammaticos and A. Ramani, \lq\lq Asymmetric discrete Painlev\'e equations", Reg. Chaot. Dyn. \textbf{5} No. 3 (2000), 273-280.
\end{thebibliography}
\end{document}